\documentclass[11pt]{article}
\def\ARXIV{1}
\usepackage{graphicx} %

\usepackage{amssymb,amsthm,amsmath}
\usepackage{fullpage}
\usepackage{thm-restate}
\usepackage[ruled,linesnumbered]{algorithm2e}
\SetKw{Continue}{continue}
\SetKw{Break}{break}
 \DontPrintSemicolon 
\usepackage{multirow}
\usepackage{natbib}
\usepackage{breakcites}


\usepackage{subfig}
\usepackage[dvipsnames]{xcolor}
\usepackage[colorlinks,citecolor=blue,linkcolor=BrickRed]{hyperref}\usepackage[colorlinks,citecolor=blue,linkcolor=BrickRed]{hyperref}
\usepackage{fullpage}
\usepackage{graphicx}
\usepackage{enumitem}
\usepackage{thmtools,thm-restate}
\usepackage{wrapfig}
\usepackage{comment}
\usepackage[capitalize,noabbrev]{cleveref}
\usepackage{float}
\usepackage{soul}
\usepackage{thmtools} 
\usepackage{thm-restate}

\graphicspath{{figures/}}

\newtheorem{theorem}{Theorem}[section]
\newtheorem{corollary}[theorem]{Corollary}

\newtheorem{lemma}[theorem]{Lemma}

\newtheorem{definition}[theorem]{Definition}

\newtheorem{remark}[theorem]{Remark}
\newtheorem{claim}[theorem]{Claim}
\newtheorem{fact}[theorem]{Fact}
\newtheorem*{claim*}{Claim}

\newif\ifFULL
\FULLtrue

\allowdisplaybreaks

\newcommand\restr[2]{{%
  \left.\kern-\nulldelimiterspace %
  #1 %
  \vphantom{\big|} %
  \right|_{#2} %
  }}

\newcommand{\argmax}{\mathrm{arg\,max}}

\renewcommand{\S}{\mathcal{S}}
\newcommand{\Atot}{\mathcal{A}_{\mathrm{tot}}}
\newcommand{\Stot}{\mathcal{S}_{\mathrm{tot}}}
\newcommand{\A}{\mathcal{A}}

\newcommand{\M}{\mathcal{M}}
\newcommand{\T}{\mathcal{T}}
\renewcommand{\P}{\mathbf{P}}
\newcommand{\eps}{\varepsilon}
\renewcommand{\epsilon}{\varepsilon}
\newcommand{\bA}{\mathbf{A}}
\newcommand{\policy}{\Pi}

\newcommand{\R}{\mathbb{R}}
\newcommand{\N}{\mathbb{N}}

\newcommand{\mdp}{\mathcal{M}}
\newcommand{\slack}{\Delta}

\newcommand{\otilde}{\Tilde{O}}
\newcommand{\poly}{\mathrm{poly}}

\newcommand{\defeq}{\stackrel{\mathrm{{\scriptscriptstyle def}}}{=}}

\newcommand{\E}{\mathbb{E}}

\newcommand{\vstar}{v_*}
\newcommand{\Det}{\mathrm{Det}}
\newcommand{\Rnd}{\mathrm{Rnd}}

\newcommand{\actionset}{\mathcal{A}}

\newcommand{\omegcurr}{\omega_{\mathrm{curr}}}

\title{Randomization for Faster Exact Optimization of\\ Discounted Markov Decision Processes}

\author{Andrei Graur \\ Stanford University \\\texttt{agraur@stanford.edu}
        \and
        Aaron Sidford \\ Stanford University \\ \texttt{sidford@stanford.edu}
        \and
        Ta-Wei Tu \\ Stanford University \\  \texttt{taweitu@stanford.edu}}
\date{}

\begin{document}

\maketitle

\pagenumbering{gobble}

\begin{abstract}
We provide faster deterministic and randomized algorithms for exactly solving discounted Markov Decision Processes (DMDPs). We obtain our results by efficiently reducing computing optimal values and policies in DMDPs to the easier tasks of policy evaluation and computing approximately optimal values in DMDPs. We provide both a straightforward deterministic reduction and a more efficient randomized variant that, together with advances in approximately solving DMDPs, yield our results.

\end{abstract}

\tableofcontents

\newpage
\pagenumbering{arabic}

\section{Introduction} \label{sec:intro}

Markov decision processes (MDPs) are ubiquitous in learning theory. They are a simple, but expressive, model for decision making under uncertainty that forms the basis for more complex learning theoretic problems.
Correspondingly, the problem of solving MDPs---and specifically, \emph{$\gamma$-discounted} MDPs (\emph{$\gamma$-DMDPs} for short)---is a well-studied problem across computer science, operations research, and machine learning.
The problem is foundational in reinforcement learning, which considers complex variations of it (see, e.g., \cite{DegrisSW06,sigaud2013markov,WeiXLGC17}), and the problem is closely related to canonical combinatorial optimization problems; for example, when $\gamma\rightarrow 1$ and transitions are deterministic, the problem encompasses the problem of computing a cycle of minimum-mean cost in a directed graph \citep{MadaniTZ09}.

Given the prominence of the problem, there has been extensive research on efficiently solving $\gamma$-DMDPs \emph{approximately}. For example, convergence rates have been provided for a variety of methods, including value iteration, policy iteration \citep{howard1960dynamic,bellman1966dynamic,Puterman94,bertsekas1995neuro}, stochastic variants (in the context of MDPs) such as temporal learning \citep{Sutton88,TsitsiklisR02} and SARSA \citep{rummery1994line,SinghJLS00}, $Q$-learning \citep{WatkinsD92} (see \cite{sutton1998reinforcement} for a
survey on these methods), policy gradient \citep{Williams92,SuttonMSM99}, etc. There has also been extensive research on understanding the sample complexity of the problem~\citep{kakade2003sample,AzarMK12,AzarMK13,Wang17,SidfordWWYY18,Wainwright19,AgarwalKY20,JinS20,sidford2023variance,JinKSW24,JKSW25}.

Relatively less is known about \emph{exactly} solving $\gamma$-DMDPs, i.e., computing an optimal policy $\pi_*$
and the associated \emph{values} $v^{\M}_{\pi_*}$.
In this paper, we focus on the \emph{tabular} setting, where we have explicit access to the $\gamma$-DMDP $\M = (\S, \A, p, r)$ that consists of $\Stot \defeq |\S|$ states and a non-zero, non-uniform number of actions at each state which amount to a total of $\Atot \defeq |\A|$ actions (see \cref{sec:prelim} for the formal problem definition).\footnote{Some prior work on solving $\gamma$-DMDPs instead suppose that there is a fixed number of actions $A$ at every state and provide running times in terms of $\Stot A$ rather than $\Atot$. For uniformity of comparison and to reflect how results often generalize, we state running times for such prior work with $\Atot = \Stot A$ even if the work does not claim it.} Even in light of the above approximate algorithms, exactly solving $\gamma$-DMDPs in this setting is non-trivial, as there is no known target accuracy $\eps$ (in terms of the problem parameters) for policies and values that suffices to obtain an optimal solution. 

Nevertheless, strikingly, \cite{Ye05} provided what they call a 
``combinatorial interior point method (IPM)'' that solves $\gamma$-DMDPs exactly in $O(\Atot^4\log\frac{\Stot}{1-\gamma})$ time.
Additionally, the classic policy iteration method of \cite{howard1960dynamic} was shown to solve $\gamma$-DMDPs in $O(\frac{\Atot}{1-\gamma}\log(\frac{1}{1-\gamma}))$ iterations and can be implemented in $O((\frac{\Stot^\omega\Atot + \Stot\Atot^2}{1-\gamma})\log(\frac{1}{1-\gamma}))$ time~\citep{Ye11a,HansenMZ13,Scherrer13}, where $\omega < 2.372$ is the matrix multiplication constant \citep{alman2025more}. See \Cref{tab:prior_MDP_algs} for prior running times for exactly solving $\gamma$-DMDPs.

These exact MDP solvers are of interest for multiple reasons. First, they provide techniques for deducing that certain actions are not in an optimal policy and can thus be discarded. Second, there have been results for solving MDPs in stochastic settings where exact MDP solvers are used 
\citep{BrunskillLLLR09,XuT20,RosenbergM21}
(though perhaps this exactness requirement could be relaxed). Third, they stand in contrast to what is known about solving general linear programs. 
Though solving a $\gamma$-DMDPs can be formulated as a linear programming problem (e.g.,~\cite{manne1960linear}), it is a notoriously challenging, well-studied open problem to exactly solve linear programs in strongly polynomial time \citep{Megiddo83} (see the discussion on exact algorithms for $\gamma$-DMDPs and LPs in \Cref{sec:related_work}). 
There is extensive research on strongly polynomial time algorithms for 
linear programming under different assumptions and the associated techniques apply to solving $\gamma$-DMDPs, however these assumptions do not seem to directly apply to $\gamma$-DMDPs (see \Cref{sec:related_work}). 
Additionally, while policy iteration can be viewed as a type of simplex method, different modifications of such methods are only known to run in weakly polynomial time~\citep{KelnerS06}. %

Correspondingly, the goal of this work is to obtain more efficient exact $\gamma$-DMDP solvers. We seek both faster algorithms and general frameworks which enable these algorithms.

\subsection{Our Results}
\label{sec:results}

In this paper we provide more efficient %
algorithms for solving $\gamma$-DMDPs \emph{exactly}. We provide deterministic algorithms with running times that improve upon the prior state of the art and randomized algorithms that further improve these running times by a factor of $\tilde{\Omega}(\Atot/\Stot)$.\footnote{Throughout this paper, we use $\tilde{O}(\cdot)$ and $\tilde{\Omega}(\cdot)$ to hide $\poly\log(\Atot)$ factors (but not $\poly\log(1/(1-\gamma))$).} 
For a comparison between prior running times and our 
improved running times, see \Cref{tab:aproxDMDPsolvers,tab:prior_MDP_algs}. 

Perhaps more importantly, we obtain these results through a straightforward framework (\Cref{alg:reduction}) that reduces solving $\gamma$-DMDPs to the easier tasks of \emph{policy evaluation}\footnote{Solving the $\gamma$-DMDP that contains only the actions of a policy $\pi$, by definition, computes $v_\pi$.} (i.e., computing $v^{\M}_{\pi}$ for a policy $\pi$) which is solvable in $O(\Stot^\omega)$ time, and, essentially, solving a $\gamma$-DMDP \emph{approximately}. 

Our framework is broadly inspired by \cite{Ye05} which provided an IPM that discards \textit{suboptimal} (i.e., not part of any optimal policy) actions over time.
At a high level, our algorithm proceeds similarly but deviates in exactly how it reasons about suboptimal actions and replaces using an IPM with an arbitrary approximate algorithm for solving $\gamma$-DMDPs. We essentially apply an analysis similar to \cite{Ye11a,Scherrer13} inside this general framework to perform a straightforward analysis of the iteration count. Ultimately, this allows us to obtain faster deterministic algorithms for solving $\gamma$-DMDPs by leveraging state-of-the-art approximate solvers (and essentially recovers the analysis of \cite{Ye05}). 
Additionally, we show that it is possible to modify this general framework to incorporate randomization; we show that a randomized variant of the framework allows us to prove that more actions can be discarded (in expectation) and thereby to decrease the number of iterations.\footnote{It is conceivable that similar improvements for deterministic algorithms could be obtained by incorporating insights of \cite{Ye05} into recent IPM results. It is also conceivable that one can incorporate our randomization insights within the framework of \cite{Ye05}. It may be useful that we obtain black-box reductions without appealing to IPMs.}

\paragraph{From Approximate Values to Exact Policies.}
Our notion of approximately solving $\gamma$-DMDPs is a natural scale-invariant definition of computing $\delta$-approximate values as defined in \Cref{def:apx_values} below (where $\delta$ is set in the reductions). 
This definition simply scales the accuracy of the values computed with $\|r\|_{\infty}$, where $r \in \R^{\actionset}$ is the vector whose entries are the rewards of each action. 

\begin{definition}[$\gamma$-DMDP Approximations]
\label[definition]{def:apx_values}
We call $v \in \R^\S$ \emph{$\epsilon$-optimal}
values of $\gamma$-DMDP $\M = (\S, \actionset, p, r)$ for accuracy $\epsilon > 0$ if $\|v - \vstar^{\M}\|_{\infty} \le \epsilon$ and we call a policy $\pi$ of $\M$ \emph{$\epsilon$-optimal} if $\|v_{\pi}^{\M} - \vstar^\M\|_{\infty} \le \epsilon$, where $\vstar^\M$ is the \emph{optimal values} of $\M$ (see \cref{sec:prelim}). We call an algorithm a \emph{$\delta$-approximate $\gamma$-DMDP
solver} for accuracy $\delta \in (0,1)$ if for input $\gamma$-DMDP $\M = (\S, \actionset, p, r)$ it outputs $\delta \|r\|_{\infty}$-optimal values
$v$ of $\M$.
\label{def:apx-algo}
\end{definition}

We provide deterministic and randomized reductions to approximate $\gamma$-DMDP solvers. To present these results we let
$\T_{\Det}(\Stot, \Atot, \delta_{\gamma})$ be the running time of a deterministic $\delta_{\gamma}$-approximate $\gamma$-DMDP algorithm on input $\M = (\S, \actionset, p, r)$ and let $\T_{\Rnd}(\Stot, \Atot, \delta_{\gamma})$ be the expected running time of such a randomized algorithm.
See \Cref{tab:aproxDMDPsolvers} for the running times for exactly solving $\gamma$-DMDPs that we obtain using our reductions and different approximate $\gamma$-DMDP algorithms.

Our first reduction is deterministic.
At a high level, it shows that given an arbitrary policy $\pi$, as long as it is not optimal, an approximate $\gamma$-DMDP solve (plus some additional evaluation of policies) suffices to detect one suboptimal action (specifically, from $\pi$) in the $\gamma$-DMDP.
Discarding that action and repeating thus gives us an $O(\Atot)$-iteration reduction.

\begin{restatable}[Deterministic Reduction]{theorem}{DetReduction}
  There is a deterministic algorithm that solves a $\gamma$-DMDP $\M = (\S, \actionset, p, r)$ in $O(\Stot^{\omega}\Atot + \Stot\Atot^2) + O(\Atot) \cdot \T_{\Det}(\Stot, \Atot, \Theta(1-\gamma))$ time.
  \label{thm:det-reduction}
\end{restatable}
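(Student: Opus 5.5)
The algorithm keeps a current action set $\A' \subseteq \A$ (initially $\A$) together with a policy $\pi$ of the restricted $\gamma$-DMDP $\M' = (\S, \A', p, r)$. In each iteration it either certifies that $\pi$ is optimal or removes at least one action of $\pi$ from $\A'$ that is provably suboptimal. Since an optimal policy of $\M$ always survives, there are at most $\Atot - \Stot$ iterations. In each iteration we do the following:
\begin{enumerate}
\item Evaluate $\pi$ exactly, i.e.\ compute $v_\pi = (I - \gamma P_\pi)^{-1} r_\pi$, in $O(\Stot^\omega)$ time.
\item Form the advantages (reduced costs) $g_a = r_a + \gamma p_a^\top v_\pi - v_\pi(s_a)$ for all $a \in \A'$, in $O(\Stot \Atot)$ time.
\item If $g \le 0$, then $\pi$ is optimal by the Bellman optimality conditions, and we stop.
\item Otherwise, run one approximate solve on a modified instance whose reward is the advantage vector $g$ (or $-g$, depending on sign convention).
\end{enumerate}
Steps 1 and 2 account for the $O(\Stot^\omega \Atot + \Stot \Atot^2)$ term, and step 4 for $O(\Atot)\cdot \T_{\Det}$.

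The key structural fact is the performance-difference identity. For any policy $\pi'$,
\[
v_{\pi'} - v_\pi = (I-\gamma P_{\pi'})^{-1} g_{\pi'} .
\]
Hence the DMDP $\M_g = (\S,\A',p,g)$ has optimal values $v_*^{\M} - v_\pi \ge 0$, and since $(I-\gamma P)^{-1} \ge I$ entrywise we get $\|v_* - v_\pi\|_\infty \ge \|g\|_\infty =: G$. For the same reason $\|v_* - v_\pi\|_\infty \le G/(1-\gamma)$.

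Now run the $\delta$-approximate solver on $\M_g$ with $\delta = c(1-\gamma)$, obtaining $u$ with $\|u - (v_* - v_\pi)\|_\infty \le \delta G$. Pick the state $s$ maximizing $u(s)$. This certifies $v_*(s) - v_\pi(s) \ge (1-\delta)\cdot(\text{that max}) \gtrsim G$ up to constants. We then apply the Ye/Scherrer-style argument: for any optimal policy $\pi_*$, use the identity with the roles swapped,
\[
v_\pi - v_* = (I-\gamma P_\pi)^{-1} g^*_\pi, \qquad g^* \le 0 \text{ the optimal advantages}.
\]
Some state $s$ has $v_*(s) - v_\pi(s) \ge \Omega(G)$. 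Since $\|(I-\gamma P_\pi)^{-1}\|_{\infty} \le 1/(1-\gamma)$, there is a state $s'$ on which $\pi(s')$ has optimal advantage $-g^*_{\pi(s')} \ge (1-\gamma)\,\Omega(G)$.

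It remains to turn this into an identifiable action. I would use the approximate values $u + v_\pi$ as a proxy for $v_*$ and compute proxy advantages $\tilde g^*_a$ for each $a \in \pi$. Any action with $\tilde g^*_a < -2\delta' G$ must be strictly suboptimal, because optimal actions satisfy $g^*_a = 0$ and the proxy errors are bounded by $O(\delta G)$.

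The hard step is making the constants close. The error in each proxy advantage is at most $(1+\gamma)\delta G \le 2cG(1-\gamma)$, while we know some action of $\pi$ has true advantage $\le -(1-\gamma)\Omega(G)$. The Ye/Scherrer lower bound must therefore be proved in the sharp form $v_*(s) - v_\pi(s) \ge G$ exactly, not merely up to constants, so that the $\Omega(G)(1-\gamma)$ gap dominates $4cG(1-\gamma)$ for a small absolute constant $c$. Scale invariance is essential here: the solver's error is $\delta\|g\|_\infty$, which is exactly why we pass the advantage rewards rather than $r$. I would verify this chain carefully, possibly applying the identity at the state maximizing $v_*-v_\pi$ and using $\|v_*-v_\pi\|_\infty \ge G$ directly in place of $u$.

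Once a suboptimal action $a=\pi(s')$ is found, we delete it. To update $\pi$ at $s'$, we switch to the action maximizing the proxy or true advantage, which is an $O(\Stot\Atot)$ operation, so $\pi$ remains a valid policy of $\M'$. This gives the stated total running time.
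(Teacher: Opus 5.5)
\textbf{The gap.} Your overall structure matches the paper: exact evaluation, an advantage-reward instance, one removal per solve, and $O(\Atot)$ iterations. But the inequality $\|v_*-v_\pi\|_\infty\ge\|g\|_\infty$ is false, and everything after it depends on it.

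The fact $(I-\gamma P)^{-1}\ge I$ only helps for a policy whose advantage vector is entrywise nonnegative, e.g.\ $\pi$ switched to a single best action. So it yields $\|v_*-v_\pi\|_\infty\ge g_{\max}$, where $g_{\max}=\max_{s,a}g_{s,a}$ is the largest \emph{positive} advantage. Combined with $v_*-v_\pi=(I-\gamma P_{\pi_*})^{-1}g_{\pi_*}$, the correct sandwich is $g_{\max}\le\|v_*-v_\pi\|_\infty\le g_{\max}/(1-\gamma)$.

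Very negative advantages can make $\|g\|_\infty$ arbitrarily larger. Take one state with self-loops of rewards $0$ (used by $\pi$), $\beta>0$, and $-M$. Then $v_\pi=0$ and $\|g\|_\infty=M$, but $\|v_*-v_\pi\|_\infty=\beta/(1-\gamma)$. The solver's error $c(1-\gamma)M$ swamps the margin $(1-\gamma)\|v_*-v_\pi\|_\infty$ by which $\pi$'s action is suboptimal, so the solver may legitimately return $u=0$. Then no action of $\pi$ is certified and your iteration makes no progress. Passing $g$ instead of $r$ makes the error scale with $\|g\|_\infty$, not with $\|v_*-v_\pi\|_\infty$, which is what you need.

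\textbf{The fix.} The paper's remedy is to drop provably suboptimal, very-negative-advantage actions (the set $X^\pi$) from the solver's instance, so its reward scale is controlled by $g_{\max}$. Justify the cut carefully, though. Since $v_*-v_\pi\ge 0$, we have $\Delta(v_*)_{s,a}\le g_{s,a}+\gamma\|v_*-v_\pi\|_\infty$. So the safe cut is $g_{s,a}<-\frac{\gamma}{1-\gamma}g_{\max}$, which leaves reward scale up to $g_{\max}/(1-\gamma)$; a single solve then needs accuracy $\Theta((1-\gamma)^2)$.

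The paper's cut $-(1+\gamma)g_{\max}$ in \cref{claim:max-min-vals} is not valid. Its proof applies \cref{lemma:discard} to $v_\pi$ with $\eps=g_{\max}$, but $\|v_\pi-v_*\|_\infty\ge g_{\max}$ is the wrong direction. A counterexample with $\gamma=0.9$: $s_1$ has self-loops of rewards $0$ and $1$; $s_0$ has a reward-$0$ self-loop and a reward-$(-4)$ move to $s_1$; $\pi$ uses the reward-$0$ loops. The move has advantage $-4<-1.9$, yet it is the unique optimal action at $s_0$, since $v_*(s_0)=5$.

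A simpler repair keeps your unpruned instance $(\S,\A',p,g)$ and accuracy $\Theta(1-\gamma)$. Take $\delta=c(1-\gamma)$ with $c<\frac{1}{2(1+\gamma)}$ (e.g.\ $c=1/4$) and $\eps=\delta\|g\|_\infty$. Test \emph{every} action, not just those of $\pi$, against $-(1+\gamma)\eps$, as the paper's discard line does. There are two cases.
\begin{enumerate}
\item If $\eps<\frac{1-\gamma}{2(1+\gamma)}\|v_*-v_\pi\|_\infty$, then \cref{lemma:discard} certifies the worst action of $\pi$.
\item Otherwise $\|v_*-v_\pi\|_\infty\le 2(1+\gamma)c\|g\|_\infty<\|g\|_\infty$. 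Since $g_{\max}\le\|v_*-v_\pi\|_\infty$, this forces $\|g\|_\infty=-g_{s,a}$ for some action. That action's proxy advantage is at most $-\|g\|_\infty+\gamma\|v_*-v_\pi\|_\infty+(1+\gamma)\eps<-(1+\gamma)\eps$, so it is certified.
\end{enumerate}
Either way each solve removes an action.
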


State-of-the-art deterministic $\delta_{\gamma}$-approximate $\gamma$-DMDP solver include an $\tilde{O}(\Atot^{\omegcurr}\log(\frac{1}{(1-\gamma)\delta_{\gamma}}))$ time\footnote{The actual running time of \cite{Brand20} is $\tilde{O}((\Atot^{\omega} + \Atot^{2.5-\alpha/2+o(1)} + \Atot^{2+1/6+o(1)})\log(\frac{1}{(1-\gamma)\delta_{\gamma}}))$, where $\omega$ is the matrix multiplication constant and $\alpha$ is the dual matrix exponent (i.e., largest $\mu$ so that multiplying a $n \times n^{\mu}$ matrix by a $n^{\mu} \times n$ matrix can be done in $O(n^{2 + \epsilon})$ time for every $\epsilon>0$). For the current values of $\omega = \omega_{\mathrm{curr}} \approx 2.371339$~\citep{AlmanDWXXZ25} and $\alpha = \alpha_{\mathrm{curr}} \ge 0.321334$~\citep{Gal24,WilliamsXXZ24}, the running time of \cite{Brand20} simplifies to $\tilde{O}(\Atot^{\omega_{\mathrm{curr}}}\log(\frac{1}{(1-\gamma)\delta_{\gamma}}))$.}
 IPM-based algorithm by \cite{Brand20}\footnote{\cite{CohenLS19,LSZ19} obtain similar running times but their algorithms are randomized. \cite{Jiang0WZ21} also obtains the same running time for the best-known value of $\omega$ and improves by a polynomial factor over these works if $\omega = 2$.} (using a reduction in \cite{sidford2023variance}), where $\omegcurr \approx 2.371339$ \citep{DWZ23,AW24,Gal24,WilliamsXXZ24,AlmanDWXXZ25}
 is the current value of the matrix multiplication constant,
 and the
 well-known value iteration algorithm of \cite{bellman1966dynamic} that  runs in $O(\frac{\Stot\Atot}{1-\gamma}\log(\frac{1}{(1-\gamma)\delta_{\gamma}}))$ time (see, e.g., \cite{JinKSW24}). 
Instantiating \cref{thm:det-reduction} with these algorithms implies the following. 

\begin{corollary}
  There is a deterministic algorithm that solves a $\gamma$-DMDP $\M = (\S, \actionset, p, r)$ in the minimum of 
  $O(\Stot^\omega \Atot + \frac{\Stot\Atot^2}{1-\gamma}\log(\frac{1}{1-\gamma}))$ and
  $\tilde{O}(\Atot^{\omegcurr+1}\log(\frac{1}{1-\gamma}))$ time.
  \label{cor:det}
\end{corollary}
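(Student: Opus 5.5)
The corollary follows by plugging the two approximate solvers quoted just before it into \cref{thm:det-reduction}. The reduction is instantiated with accuracy $\delta_\gamma = \Theta(1-\gamma)$, so the only work is to evaluate $\T_{\Det}(\Stot,\Atot,\Theta(1-\gamma))$ for each solver, multiply by the $O(\Atot)$ iterations, add the overhead $O(\Stot^\omega\Atot + \Stot\Atot^2)$, and then take the better of the two bounds. Since both resulting algorithms are deterministic, running the one whose bound is smaller already attains the minimum. The two bounds are explicit functions of the input parameters, so the choice can be made up front; alternatively one can run both in lockstep.

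\emph{Value iteration instantiation.} With $\delta_\gamma = \Theta(1-\gamma)$, value iteration gives
\[
\T_{\Det} = O\Bigl(\tfrac{\Stot\Atot}{1-\gamma}\log\tfrac{1}{(1-\gamma)^2}\Bigr) = O\Bigl(\tfrac{\Stot\Atot}{1-\gamma}\log\tfrac{1}{1-\gamma}\Bigr).
\]
This uses $\log\frac{1}{(1-\gamma)^2} = 2\log\frac{1}{1-\gamma}$. It also needs a little care when $\gamma$ is bounded away from $1$: there I interpret $\log$ terms as $\max\{1,\cdot\}$, as is standard, and the constant hidden in $\Theta(1-\gamma)$ only changes constants. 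Multiplying by $O(\Atot)$ gives $O(\frac{\Stot\Atot^2}{1-\gamma}\log\frac{1}{1-\gamma})$. The overhead term $\Stot\Atot^2$ is dominated by this, leaving exactly the first expression, $O(\Stot^\omega\Atot + \frac{\Stot\Atot^2}{1-\gamma}\log\frac{1}{1-\gamma})$.

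\emph{IPM instantiation.} The algorithm of \cite{Brand20} (via \cite{sidford2023variance}) gives $\T_{\Det} = \tilde O(\Atot^{\omegcurr}\log\frac{1}{(1-\gamma)^2}) = \tilde O(\Atot^{\omegcurr}\log\frac{1}{1-\gamma})$. Multiplying by $O(\Atot)$ gives $\tilde O(\Atot^{\omegcurr+1}\log\frac{1}{1-\gamma})$. For the overhead, I use $\Stot\le\Atot$, which holds because every state has at least one action. Then $\Stot^\omega\Atot \le \Atot^{\omega+1}$ and $\Stot\Atot^2\le\Atot^3\le\Atot^{\omegcurr+1}$. Taking $\omega=\omegcurr$ as the exponent actually used for policy evaluation, both overhead terms are absorbed.

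The main (mild) obstacle is bookkeeping rather than mathematics. One point is checking that the overhead terms are genuinely dominated, which requires $\Stot\le\Atot$ and $\omega \ge 2$. The other is ensuring that the accuracy $\Theta(1-\gamma)$ only contributes a constant factor inside the logarithm. Everything else is direct substitution into \cref{thm:det-reduction}.
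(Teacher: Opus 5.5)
Your proposal is correct and follows the paper's argument: instantiate \cref{thm:det-reduction} at accuracy $\Theta(1-\gamma)$ with value iteration and with the IPM of \cite{Brand20}, then take the better of the two bounds. Your extra bookkeeping fills in details the paper leaves implicit, namely absorbing the $O(\Stot^\omega\Atot + \Stot\Atot^2)$ overhead via $\Stot \le \Atot$ and $\omegcurr \ge 2$, and treating the logarithm as bounded below by a constant (justified because $\delta_\gamma = \frac{1-\gamma}{3(1+\gamma)^2}$ gives $\frac{1}{(1-\gamma)\delta_\gamma} \ge 3$).
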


\paragraph{Randomization for Faster Algorithms.}

While our deterministic reduction removes the reliance of \cite{Ye05} on using a specific IPM, our best analysis only bounds that it proceeds in a similar
$\Theta(\Atot)$ iterations; in our worst case analysis it may only detect one suboptimal action per approximate solve.
We improve this bound by showing that instead of detecting suboptimal actions based on an \emph{arbitrary} policy $\pi$,
if we pick $\pi$ \emph{uniformly at random} at each iteration, then asymptotically more actions can be discarded in expectation.
In particular, this improves the expected iteration count to $\tilde{O}(\Stot)$ which is lower than $O(\Atot)$ whenever there are sufficiently many actions per state.

\begin{restatable}[Randomized Reduction]{theorem}{RandReduction}
  There is a randomized algorithm that solves a $\gamma$-DMDP $\M = (\S, \actionset, p, r)$ in $\tilde{O}(\Stot^{\omega+1} + \Stot^2\Atot) + \tilde{O}(\Stot) \cdot \T_{\Rnd}(\Stot, \Atot, \Theta(1-\gamma))$ expected time.
  \label{thm:rand-reduction}
\end{restatable}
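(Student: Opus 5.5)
The plan is to prove \Cref{thm:rand-reduction} by running the same iterative framework as in \Cref{thm:det-reduction}, with one change: in each round, instead of working with an arbitrary policy, we sample a policy $\pi$ uniformly at random from the current (pruned) DMDP. That is, independently at every state $s$ we pick one of its surviving actions uniformly.

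A round consists of the following steps. First, evaluate $v_\pi$ exactly in $O(\Stot^\omega)$ time by solving $(I-\gamma P_\pi)v = r_\pi$. Next, run a $\delta$-approximate solver with $\delta = \Theta(1-\gamma)$; rather than running it on $\M$ directly, we run it on a suitably shifted instance, for example rewards replaced by the advantages $r + \gamma P v_\pi - v_\pi$ (restricted to actions), so that the scale $\|r'\|_\infty$ is comparable to the gap $\|\vstar - v_\pi\|_\infty$. Finally, discard every action $a$ whose certified advantage with respect to the approximate optimal values is negative by a margin of order $\delta\|r'\|_\infty/(1-\gamma)$.

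The key structural lemma, which I take to underlie the deterministic reduction and be established there, is a Ye/Scherrer-type gap argument. Let $a_s$ be the action of $\pi$ at a state $s$ attaining (up to constants) the largest "suboptimality contribution" $\vstar(s) - (r_{a_s} + \gamma P_{a_s}\vstar)$. Since $\|\vstar - v_\pi\|_\infty \le \frac{1}{1-\gamma}\max_s$ of these per-state gaps, this action has gap at least $(1-\gamma)\|\vstar - v_\pi\|$. With $\delta = \Theta(1-\gamma)$ the approximate values are accurate enough to certify it, so it is provably suboptimal. More generally, the same reasoning shows that every action whose gap (in $\vstar$) is at least a constant fraction times the largest gap of $\pi$'s actions gets discarded.

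The main obstacle, and the heart of the randomized improvement, is showing that a random $\pi$ makes many actions certifiable. Order all surviving actions by their gap $g(a) = \vstar(s_a) - r_a - \gamma P_a\vstar \ge 0$. The certification threshold is $c\cdot\max_s g(\pi(s))$ for a constant $c$ (say $c$ a small constant from the lemma), and every action with gap at least this threshold is removed. Let $G = \max_s g(\pi(s))$. The probability that $G$ is small is the probability that every state picks among its low-gap actions, and a random maximum over states is a "random pivot" in a quickselect-like sense.

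To make this precise, I would discretize gaps into geometric buckets (powers of $2$ up to $1/c$) and argue via a potential: consider $m$, the number of actions whose gap exceeds $G/c$ … or, more cleanly, rank actions per state. The claim I would aim for is that, in expectation, the number of surviving actions decreases by a factor $(1 - \Omega(1/\Stot))$ per round. The reasoning goes through the global sorted list of gaps. If the realized $G$ is the gap of the action ranked $k$ from the top among all actions, then roughly all actions with rank above $k$ (up to the constant-factor slack, handled by buckets costing a $\log$ factor) are removed. Moreover, $P(\text{rank of } G \ge k)$ is at least the probability that some state picks a top-$k$ action. A convexity/averaging argument over the states, in which state $s$ holds $k_s$ of the top $k$ actions out of $n_s$ total, bounds the expected removal by $\Omega(\Atot/\Stot)$ when states have comparable sizes. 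Unbalanced states need care, and there I would first try weighting by $n_s$ and using $\sum_s n_s = \Atot$. This yields $\tilde O(\Stot)$ expected rounds: the count contracts by $(1-\Omega(1/\Stot))$ from $\Atot$ down to $\Stot$, plus $\log$ factors.

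Total cost: $\tilde O(\Stot)$ rounds, each costing $O(\Stot^\omega + \Stot\Atot)$ for evaluation and advantage computation plus one call to the randomized solver, giving the stated bound by linearity of expectation. If the constant-factor slack in certification cannot be absorbed by bucketing directly, I would fall back on the Scherrer-style argument that the maximum gap of $\pi$'s actions itself shrinks geometrically across rounds, to bound the number of buckets visited.
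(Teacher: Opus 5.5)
\textbf{The gap is in the counting step,} which is the only part of this theorem that goes beyond \cref{thm:det-reduction}. You aim to show that for a uniformly random $\pi$ the number of surviving actions shrinks by a factor $1-\Omega(1/\Stot)$ per round in expectation. This is false once the per-state action counts are unbalanced, and pruning does not preserve balance.

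Take $\Stot-1$ states $s_1,\dots,s_{\Stot-1}$, each with two self-loops of rewards $0$ and $-C^i$ (so gaps $0$ and $C^i$). Add one state with $\Stot^2+1$ self-loops of rewards $0$ and tiny negative values. For random $\pi$, $G=\max_s g(\pi(s))=C^j$, where $j$ is the largest index at which the bad action was drawn. For $C$ large, the guarantees available (yours, or \cref{lemma:discard}) only force discarding the bad actions at $s_j,\dots,s_{\Stot-1}$. Indeed, if $C>3/(1-\gamma)$ and $\mathsf{ApxALG}$ returns exact values, \cref{alg:reduction} discards precisely these. Their number exceeds $k$ with probability $2^{-k}$. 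So for $\Omega(\Stot)$ rounds only $O(1)$ actions disappear in expectation while more than $\Stot^2$ survive.

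Your repairs do not help either:
\begin{itemize}
\item The constant-factor slack you want to bucket is not there in the harmful direction: \cref{lemma:discard} removes every action with gap at least $G$ exactly.
\item Geometric bucketing could not work anyway, since gap ratios are not bounded in terms of $\Stot,\Atot$, as the example shows.
\item The fallback of the worst gap of $\pi$ shrinking across rounds fails because $\pi$ is resampled from scratch each round. A Scherrer-type argument would also put $\frac{1}{1-\gamma}$ factors into the round count, which the theorem does not allow.
\end{itemize}

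\textbf{The missing idea} is to measure progress multiplicatively per state, weighting a removal at $s$ by roughly $1/|\A_s|$ rather than by $1$. The paper tracks the number of policies $\Phi(\M)=\prod_s|\A_s|$ and proves in \cref{lemma:potential} that $\mathbb{E}_\pi[\Phi(\M\setminus D^\M(\pi))]\le\frac12\Phi(\M)$, by induction on $\Phi$:
\begin{itemize}
\item The globally worst action $(s_*,a_*)$ lies in $D^\M(\pi)$ for every $\pi$.
\item Conditioned on $\pi(s_*)\neq a_*$, $\pi$ is uniform over the policies of $\M'=\M\setminus\{(s_*,a_*)\}$, and $D^{\M'}(\pi)=D^\M(\pi)\setminus\{(s_*,a_*)\}$. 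So induction gives the bound $\frac12\Phi(\M')$.
\item On the event $\pi(s_*)=a_*$, which has probability $1/|\A_{s_*}|$, one simply uses the bound $\Phi(\M')$.
\item Combining gives $\frac{k^2-1}{2k^2}\Phi(\M)$ with $k=|\A_{s_*}|$.
\end{itemize}
Since $\Phi\le\Atot^{\Stot}$, and $D^\M(\pi)$ is all of $\A$ once $\pi$ is optimal, this yields $O(\Stot\log\Atot)$ expected rounds. In the example above, $\Phi=2^{m}(\Stot^2+1)$ still halves in expectation each round, even though only $O(1)$ actions disappear.

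\textbf{Two smaller points in your per-round step also need fixing.}
\begin{itemize}
\item Shifting rewards to $\slack(v_\pi)$ alone does not make $\|r'\|_\infty$ comparable to $\|\vstar-v_\pi\|_\infty$, because advantages can be arbitrarily negative. You must also drop the actions with advantage below $-(1+\gamma)\slack_{\max}(v_\pi)$, which \cref{claim:max-min-vals} certifies as suboptimal.
\item The discard margin must be $(1+\gamma)$ times the solver accuracy $\delta\|r'\|_\infty$, not of order $\delta\|r'\|_\infty/(1-\gamma)=\Theta(\|r'\|_\infty)$. The gap of $\pi$'s worst action can be as small as $(1-\gamma)\|v_\pi-\vstar\|_\infty$ while $\slack_{\max}(v_\pi)$ is nearly $\|v_\pi-\vstar\|_\infty$. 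So for $\gamma$ near $1$ your margin can leave nothing certified.
\end{itemize}
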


\begin{table}
    \centering
    \begin{tabular}{|c|c|c|}
         \hline
         Paper(s) & Method & Running time \\
         \hline
         \cite{Ye05} & Combinatorial IPM & $O(\Atot^4\log(\frac{\Stot}{1-\gamma}))$ \\
         \hline
         \cite{Ye11a} & Simplex PI & $O((\frac{\Stot^{\omega+1}\Atot + \Stot^2\Atot^2}{1-\gamma})\log(\frac{1}{1-\gamma}))$ \\
         \hline
         \cite{howard1960dynamic,HansenMZ13,Scherrer13} & Howard's PI & $O((\frac{\Stot^\omega\Atot + \Stot\Atot^2}{1-\gamma})\log(\frac{1}{1-\gamma}))$ \\
         \hline
    \end{tabular}
    \caption{Prior running times for solving $\gamma$-DMDPs exactly.}
    \label{tab:prior_MDP_algs}
\end{table}

\begin{table}
\centering
\begin{tabular}{|p{2.5cm}|c|c|}
\hline
\centering Approximate solver & Implied running time & Deterministic or randomized \\\hline
\centering\cite{bellman1966dynamic} & $O(\Stot^\omega\Atot + \frac{\Stot\Atot^2}{1-\gamma}\log(\frac{1}{1-\gamma}))$ & Deterministic \\\hline
\centering\cite{Brand20} & $\tilde{O}(\Atot^{\omegcurr+1}\log(\frac{1}{1-\gamma}))$ & Deterministic \\\hline
\centering\cite{Brand20} & $\tilde{O}(\Stot\Atot^{\omegcurr}\log(\frac{1}{1-\gamma}))$ & Randomized \\\hline
\centering\cite{BrandLLSS0W21} & $\tilde{O}((\Stot^2\Atot + \Stot^{3.5})\log(\frac{1}{1-\gamma}))$ & Randomized \\\hline
\centering\cite{JKSW25}
& $\tilde{O}(\Stot^{\omega+1} + (\Stot^2\Atot + \frac{\Stot^{1.5}\Atot}{1-\gamma})\log^{O(1)}(\frac{1}{1-\gamma}))$ & Randomized \\\hline
\end{tabular}
\caption{Improved running times for solving $\gamma$-DMDPs exactly.}
\label{tab:aproxDMDPsolvers}
\end{table}

Note that \cref{thm:rand-reduction} requires the approximate $\gamma$-DMDP solver to have an expected running time instead of, e.g., succeeding with constant probability. 
However, in the following \Cref{lemma:mc-to-lv-intro} (proved in \cref{sec:monte-carlo-to-las-vegas}) we show that we can apply fairly well-known techniques and properties of $\gamma$-DMDPs to efficiently convert an approximate $\gamma$-DMDP algorithm that succeeds with constant probability into one with an expected running time, with a limited loss in approximation.

\begin{restatable}{lemma}{MCtoLV}
    Suppose there is a randomized algorithm that on input $\gamma$-DMDP $\M^\prime=(\S^\prime,\actionset^\prime,p^\prime,r^\prime)$ outputs $\delta^\prime \|r^\prime\|_{\infty}$-optimal values $v$ of $\M$ in time $\T_{\mathrm{MC}}(\Stot^\prime, \Atot^\prime, \delta^\prime)$ with constant probability.
Then, there is a randomized $\delta$-approximate $\gamma$-DMDP algorithm that on input $\gamma$-DMDP $\M=(\S, \actionset, p, r)$ runs in expected time $\T_{\Rnd}(\Stot, \Atot, \delta) = O(\Stot\Atot + \T_{\mathrm{MC}}(\Stot, \Atot, \Theta(\delta \cdot (1-\gamma)))$.
\label{lemma:mc-to-lv-intro}
\end{restatable}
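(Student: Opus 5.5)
The plan is the standard Monte Carlo to Las Vegas conversion: repeatedly run the constant-probability solver with a tighter accuracy, and after each run certify its output in $O(\Stot\Atot)$ time. The certificate must be verifiable without knowing $\vstar^\M$. The natural choice is the Bellman residual. Let $T$ denote the Bellman optimality operator, $(Tv)(s) = \max_{a \in \actionset_s} [r_a + \gamma p_a^\top v]$. One application of $T$ costs $O(\Stot\Atot)$ time, since $p$ has $O(\Stot\Atot)$ entries. Because $T$ is a $\gamma$-contraction in $\ell_\infty$ with fixed point $\vstar^\M$, every $v$ satisfies
\[
(1-\gamma)\|v-\vstar^\M\|_\infty \le \|v - Tv\|_\infty \le (1+\gamma)\|v-\vstar^\M\|_\infty .
\]
The lower bound gives \emph{soundness}: if $\|v-Tv\|_\infty \le \delta(1-\gamma)\|r\|_\infty$, then $v$ is $\delta\|r\|_\infty$-optimal. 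The upper bound gives \emph{completeness}: if the solver is called with accuracy $\delta' = \delta(1-\gamma)/2$ and succeeds, its output $v$ satisfies $\|v-Tv\|_\infty \le (1+\gamma)\delta'\|r\|_\infty \le \delta(1-\gamma)\|r\|_\infty$, so the test passes.

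The algorithm is therefore as follows.
\begin{enumerate}
\item If $r = 0$, output $v=0$, which is exactly $\vstar^\M$.
\item Otherwise, in each round, run the Monte Carlo solver on $\M$ with $\delta' = \delta(1-\gamma)/2$ to obtain a candidate $v$.
\item Compute $Tv$ and $\|v - Tv\|_\infty$ and $\|r\|_\infty$ in $O(\Stot\Atot)$ time.
\item If $\|v-Tv\|_\infty \le \delta(1-\gamma)\|r\|_\infty$, output $v$; otherwise start a new round.
\end{enumerate}
By soundness, any output is $\delta\|r\|_\infty$-optimal, so the algorithm never errs. Each round succeeds independently with probability at least some constant $c>0$, because the runs use fresh randomness. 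The number of rounds is therefore geometric with mean at most $1/c = O(1)$. Summing the per-round cost over rounds gives expected time $O(\Stot\Atot + \T_{\mathrm{MC}}(\Stot,\Atot,\Theta(\delta(1-\gamma))))$.

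One subtlety is the solver's own running time. If $\T_{\mathrm{MC}}$ is only a worst-case bound that holds on every run, the argument above is complete. If it is a bound that holds only with constant probability, I would additionally abort each run once it exceeds $\T_{\mathrm{MC}}$ steps and treat that as a failed round. The per-round success probability is still a constant, and each round then costs $O(\T_{\mathrm{MC}} + \Stot\Atot)$ deterministically.

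The main step is the residual sandwich inequality, which is a two-line contraction argument. Write $\|v - \vstar^\M\|_\infty \le \|v - Tv\|_\infty + \|Tv - T\vstar^\M\|_\infty \le \|v-Tv\|_\infty + \gamma\|v-\vstar^\M\|_\infty$ and rearrange to get the lower bound. For the upper bound, use the triangle inequality the other way, $\|v - Tv\|_\infty \le \|v-\vstar^\M\|_\infty + \|T\vstar^\M - Tv\|_\infty$. The only thing to be careful about is that the test threshold depends only on computable quantities, namely $\|r\|_\infty$, $\gamma$ and $\delta$, and not on $\vstar^\M$. The loss of the factor $(1-\gamma)$ in the required accuracy is exactly what the statement allows.
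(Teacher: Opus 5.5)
Your proposal is correct and matches the paper's proof. Both run the Monte Carlo solver at accuracy $\Theta(\delta(1-\gamma))$, certify the output with the Bellman-residual test $\|v - \T_*(v)\|_\infty \le \delta(1-\gamma)\|r\|_\infty$ (sound and complete by the $(1-\gamma)$/$(1+\gamma)$ contraction sandwich), and repeat until it passes. Your $\delta' = \delta(1-\gamma)/2$ in place of the paper's $\delta(1-\gamma)/(1+\gamma)$ makes no difference, and your handling of $r=0$ and of aborting runs that exceed $\T_{\mathrm{MC}}$ is extra care the paper leaves implicit.
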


For randomized Monte Carlo $\delta_{\gamma}$-approximate $\gamma$-DMDPs solvers, 
the state of the art includes 
an
$\tilde{O}((\Stot\Atot + \frac{\sqrt{\Stot}\Atot}{1-\gamma}) \log^{O(1)}\frac{1}{(1-\gamma)\delta_{\gamma}})$ time\footnote{We note that \cite{JKSW25} only claimed a running time of $\tilde{O}((\Stot\Atot + \frac{\sqrt{\Stot}\Atot}{1-\gamma}) \log^{O(1)}\frac{1}{(1-\gamma)\delta_{\gamma}})$ when $\Stot\Atot\leq \Atot(1-\gamma)^2$. However, when $\Stot\Atot > \Atot/(1-\gamma)^2$, the $\tilde{O}((\Stot\Atot + \Atot/(1-\gamma)^2)\log^{O(1)}\frac{1}{(1-\gamma)\delta_{\gamma}})$ running of \cite{JinKSW24} would also be bounded by $\tilde{O}((\Stot\Atot + \frac{\sqrt{\Stot}\Atot}{1-\gamma}) \log^{O(1)}\frac{1}{(1-\gamma)\delta_{\gamma}})$. We additionally note that this running time is better than the standard value iteration \cite{bellman1966dynamic} in all regimes of parameters.}
sampling-based value iteration algorithm \cite{sidford2023variance,JinKSW24,JKSW25} and an
$\tilde{O}((\Stot\Atot + \Stot^{2.5})\log(\frac{1}{(1-\gamma)\delta_{\gamma}}))$ time IPM-based algorithm \cite{BrandLLSS0W21}. 
Combining \cref{lemma:mc-to-lv-intro} with these algorithms (as well as that in \cite{Brand20}) and plugging them into \cref{thm:rand-reduction}, we obtain the following running times for exactly solving $\gamma$-DMDPs.

\begin{corollary}
  There is a randomized algorithm that solves an
  input $\gamma$-DMDP $\M = (\S, \actionset, p, r)$ in the minimum of $\tilde{O}(\Stot^{\omega+1} + (\Stot^2\Atot + \frac{\Stot^{1.5}\Atot}{1-\gamma})\log^{O(1)}(\frac{1}{1-\gamma}))$, $\tilde{O}((\Stot^2\Atot + \Stot^{3.5})\log(\frac{1}{1-\gamma}))$, and $\tilde{O}(\Stot\Atot^{\omegcurr} \allowbreak \log(\frac{1}{1-\gamma}))$ expected time.  
  \label{cor:main}
\end{corollary}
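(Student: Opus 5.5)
The plan is to obtain each bound by plugging a Monte Carlo approximate solver into \cref{lemma:mc-to-lv-intro}, which turns it into a Las Vegas $\delta$-approximate solver. We then substitute the result into \cref{thm:rand-reduction} with $\delta = \Theta(1-\gamma)$. \cref{lemma:mc-to-lv-intro} requires the Monte Carlo solver to be run at accuracy $\delta' = \Theta(\delta(1-\gamma)) = \Theta((1-\gamma)^2)$. Consequently every factor $\log\frac{1}{(1-\gamma)\delta'}$ becomes $\log\frac{1}{(1-\gamma)^3} = O(\log\frac{1}{1-\gamma})$, and $\log^{O(1)}$ factors stay $\log^{O(1)}(\frac{1}{1-\gamma})$. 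The expected time per call is therefore $\T_{\Rnd}(\Stot,\Atot,\Theta(1-\gamma)) = O(\Stot\Atot) + \T_{\mathrm{MC}}(\Stot,\Atot,\Theta((1-\gamma)^2))$. Multiplying by the $\tilde{O}(\Stot)$ iterations of \cref{thm:rand-reduction} and adding its overhead $\tilde{O}(\Stot^{\omega+1}+\Stot^2\Atot)$ gives the total. The $O(\Stot\Atot)$ conversion cost contributes $\tilde{O}(\Stot^2\Atot)$, which is absorbed everywhere.

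\emph{Sampling-based value iteration.} Here $\T_{\mathrm{MC}} = \tilde{O}((\Stot\Atot + \frac{\sqrt{\Stot}\Atot}{1-\gamma})\log^{O(1)}\frac{1}{1-\gamma})$. Multiplying by $\tilde{O}(\Stot)$ gives $\tilde{O}((\Stot^2\Atot + \frac{\Stot^{1.5}\Atot}{1-\gamma})\log^{O(1)}\frac{1}{1-\gamma})$, and adding $\Stot^{\omega+1}$ gives the first bound.

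\emph{The IPM of \cite{BrandLLSS0W21}.} Here $\T_{\mathrm{MC}} = \tilde{O}((\Stot\Atot+\Stot^{2.5})\log\frac{1}{1-\gamma})$, so we get $\tilde{O}((\Stot^2\Atot+\Stot^{3.5})\log\frac{1}{1-\gamma})$ plus overhead $\tilde{O}(\Stot^{\omega+1}+\Stot^2\Atot)$. Since $\omega+1<3.5$, the overhead is dominated, giving the second bound.

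\emph{The solver of \cite{Brand20}.} Here $\T = \tilde{O}(\Atot^{\omegcurr}\log\frac{1}{1-\gamma})$; this solver is deterministic, so \cref{lemma:mc-to-lv-intro} is not even needed. We get $\tilde{O}(\Stot\Atot^{\omegcurr}\log\frac{1}{1-\gamma})$. The overhead is dominated because $\Atot\ge\Stot$: indeed $\Stot^{\omega+1}\le\Stot\Atot^{\omega}$ and $\Stot^2\Atot\le\Stot\Atot^2$. This gives the third bound.

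Taking the best of the three algorithms yields the minimum. There is no real obstacle; the only care points are the following.
\begin{itemize}
\item The accuracy blow-up $\delta'=\Theta((1-\gamma)^2)$ only changes constants inside the logarithms.
\item The stated regime restriction for \cite{JKSW25} is handled by the footnote combining it with \cite{JinKSW24}.
\item The $\omega$ versus $\omegcurr$ bookkeeping in the overhead: take $\omega$ to be the current value.
\end{itemize}
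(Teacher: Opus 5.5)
Your proposal is correct and takes the same route as the paper. It converts each Monte Carlo solver into a Las Vegas one via \cref{lemma:mc-to-lv-intro}, where the accuracy loss to $\Theta((1-\gamma)^2)$ only changes constants inside the logarithms. It uses the deterministic solver of \cite{Brand20} directly and plugs everything into \cref{thm:rand-reduction}. Your explicit absorption of the $\tilde{O}(\Stot^{\omega+1}+\Stot^2\Atot)$ overhead, via $\omega+1<3.5$ and $\Atot\ge\Stot$, fills in bookkeeping the paper leaves implicit.
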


Lastly, in \cref{appendix:random-policy}, we show that instantiating \cref{thm:rand-reduction} instead with the classic policy iteration algorithm of \cite{howard1960dynamic} results in a simple randomized policy iteration with a running time that improves upon e.g., \cite{Scherrer13}, but does not improve upon applying our framework with the (fastest) sampling-based value iteration methods of \cite{sidford2023variance,JinKSW24,JKSW25}.

\subsection{Additional Related Work}
\label{sec:related_work}

\paragraph{Approximate Algorithms for $\gamma$-DMDPs.} 

There have been numerous advances 
in obtaining fast approximate algorithms for $\gamma$-DMDPs in different parameter regimes.
Beyond classic algorithms such as policy and value iterations~\citep{bertsekas2012dynamic,bellman1966dynamic,howard1960dynamic,tseng1990solving,LittmanDK95}, there are several works that give sampling-based methods for $\gamma$-DMDP, some of them in the context of reinforcement learning~\citep{LattimoreH12,AzarMK12,AzarMK13,sidford2023variance,JinKSW24,JKSW25}.
Since $\gamma$-DMDPs are instances of linear programs, they can also be solved to high accuracy using weakly polynomial time linear programming (LP) algorithms (e.g., using a reduction of \cite{sidford2023variance}).
High-accuracy algorithms for linear programs have been extensively studied and there has been recent progress in improving their running times \citep{LeeS14,LeeS15,CohenLS19,Brand20,BrandLLSS0W21,Jiang0WZ21} some of which yield several state-of-the-arts for $\gamma$-DMDPs.

\paragraph{Exact Algorithms for $\gamma$-DMDPs and LPs.}
Our work on exact algorithms for $\gamma$-DMDPs builds on a line of work on providing strongly polynomial time
algorithms for solving $\gamma$-DMDPs when the discount factor $\gamma$ is a constant. 
\cite{Ye05} provided the first such algorithm for this variant of the problem, followed by~\citep{Ye11a,HansenMZ13,Scherrer13}. 
Strongly polynomial algorithms can be defined for different models of computation.
In the \emph{real RAM} model, an algorithm runs in strongly polynomial time if the number of arithmetic operations it performs does not depend on the bit complexity of the inputs.
In the stricter \emph{Turing model}, on the other hand, it is additionally required that the \emph{space complexity} of the algorithm is bounded by $\mathrm{poly}(L)$, where $L$ is the total bit complexity of the inputs.\footnote{One classic example is that one can compute $2^{2^n}$ using $n$ multiplications. However, the space complexity of such an algorithm would be $2^n$, so it is considered strongly polynomial in real RAM but not the Turing model.}
An outstanding open problem is providing a strongly polynomial time algorithm in the Turing model \cite{Megiddo83} (see \cite{DadushKNOV24} for further discussion).
In most of the paper we do not discuss bit complexity;
however, in \Cref{rem:strongly_poly}, we briefly discuss our framework's performance in the Turing model. %

Related to work on strongly polynomial algorithms for $\gamma$-DMDPs with constant $\gamma$ is a long sequence of works on obtaining exact, strongly polynomial LP algorithms, for certain classes of suitably-conditioned LPs~\citep{Tardos86,VavasisY96,DadushNV20,DadushHNV20}.
These works give methods whose running time only depends on certain condition numbers, e.g., $\chi_{\bA}$ and $\chi_{\bA}^*$ defined in some of these works,
of the constraint matrix $\bA$ of the linear program. 
However, when $\mathbf{A}$ is the constraint matrix in the linear programming formulation of solving $\gamma$-DMDPs, $\chi_{\bA}$ and $\chi_{\bA}^*$ depend on the bit complexity of the transition probabilities of the $\gamma$-DMDP. 
Hence, it is unclear how to apply the results in~\citep{Tardos86,VavasisY96,DadushNV20,DadushHNV20} to obtain strongly polynomial time algorithms for $\gamma$-DMDP, even for bounded values of $\gamma$. 
Note that some of these works, e.g.,  \cite{Tardos86,DadushNV20}, analogously to this paper, give a reduction from exact to approximate LP algorithms. In fact, Tardos' framework, when applied to $\gamma$-DMDPs, proceeds similarly to our deterministic framework;
    however, we leverage the particular optimality properties specific to $\gamma$-DMDPs to obtain %
    our particular bound on the accuracy required for the approximate solves.

\paragraph{Deterministic $\gamma$-DMDPs.}

$\gamma$-DMDP instances where $p(s,a) \in \{0, 1\}$ for all state-action pairs $(s,a)$ (i.e., transitions to a new state are deterministic) are called deterministic $\gamma$-DMDPs and are well-studied. 
In particular, strongly polynomial time algorithms that do not depend on the value of $\gamma$ are known for deterministic $\gamma$-DMDPs~\citep{MadaniTZ09,Karczmarz22}.
Simplex methods are also known to be 
strongly polynomial time algorithms on deterministic $\gamma$-DMDPs \citep{PostY13,HansenKZ14}, yet such variants have worse running times than the specialized methods of \citep{MadaniTZ09,Karczmarz22}.

\paragraph{Simplex Methods and $\gamma$-DMDPs.}
Given the close relationship between policy iteration algorithms and simplex methods for $\gamma$-DMDPs (for instance, the version of policy iteration that only switches the action at a single state every time \cite{Ye11a} is one type of simplex method for $\gamma$-DMDPs), 
we note that there have been several results on proving lower bounds for certain randomized variants of the simplex methods. 
In particular, \cite{FriedmannHZ11} proves that the algorithm that switches to a uniformly random improving policy in every iteration requires subexponential time. 
Moreover, \cite{MelekopoglouC94} shows that four variants of policy iteration take exponential time for solving $\gamma$-DMDPs when $\gamma \to 1$. 
Note that in the rest of the paper we only consider the setting where $\gamma$ is bounded away for $1$, and thus the lower bounds in \cite{MelekopoglouC94,FriedmannHZ11} do not contradict our results.

\subsection{Paper Organization}

The rest of the paper is organized as follows.
In \cref{sec:prelim}, we establish the necessary notation and preliminaries.
In \cref{sec:reduction}, we present our framework for reducing the problem of exactly solving a $\gamma$-DMDP to approximately solving several $\gamma$-DMDPs and prove our deterministic reduction (\Cref{thm:det-reduction}). 
In \Cref{sec:rand_reduction}, we instantiate our framework to obtain our randomized reduction (\Cref{thm:rand-reduction}). 
In \cref{appendix:random-policy}, we again use the framework and the randomized analysis to present our simple randomized algorithm based on the policy iteration algorithm of \cite{howard1960dynamic}. 
In \cref{sec:monte-carlo-to-las-vegas} we provide a fairly straightforward proof of \cref{lemma:mc-to-lv-intro}.

\section{Preliminaries} \label{sec:prelim}

\paragraph{General Notation.}
We write $u \leq v$ (respectively, $u \geq v$) for $u, v \in \R^d$ to indicate $u_i \leq v_i$ (respectively, $u_i \geq v_i$) for all $i \in [d]$. 

\paragraph{$\gamma$-DMDPs.} This paper considers the problem of solving or optimizing a $\gamma$-DMDP for $\gamma \in (0, 1)$. We specify a $\gamma$-DMDP by a tuple 
$\M = (\S, \actionset, p, r)$, where $\S$ is a non-empty finite set of \emph{states}, $\actionset = \{(s, a) \mid s \in \S\}$ is a finite set of \emph{state-action pairs}, $p: \actionset \to \slack^\S$ for $\slack^\S \defeq \{q\in \R^\S_{\geq 0} \mid \|q\|_1 = 1\}$ denotes \emph{transition probabilities}, and $r \in \R^\actionset$ denotes \emph{rewards}. We call $\A_s \defeq \{a \mid (s,a) \in \actionset\}$ the \emph{actions} at $s \in \S$ and assume each $\A_s \neq \emptyset$. 

An \textit{agent} interacts with a $\gamma$-DMDP in time-steps $t = 0,1,\ldots$. If the agent takes action $a_t \in \A_{s_t}$ at state $s_t \in \S$ at time-step $t$ then the agent receives reward $r_{(s_t,a_t)} \in \R$ and randomly \emph{transitions} or \emph{moves} to state $s_{t+1} \in \S$ sampled independently from distribution $p(s_t,a_t)$ for time-step $t+1$, i.e., $\Pr[s_{t+1} = \bar{s} \mid s_t] = p(s_t,a_t)_{\bar{s}}$.

In a $\gamma$-DMDP we optimize over (deterministic stationary) \textit{policies} $\pi$ which map states $s \in \S$ to actions at that state, i.e.,
$\pi(s) \in \A_s$. 
The value of a policy $\pi$ at state $s \in \S$, denoted by $[v^{\M}_\pi]_s$, is the expected $\gamma$-discounted reward received by an agent following $\pi$ starting from state $s$, i.e., 
\begin{equation}
\label{eq:val_vec}
[v^\M_\pi]_s \defeq \E \left[\sum_{t = 0}^{\infty} \gamma^t r_{s_t,\pi(s_t)}\right]
\text{ where }
s_0 = s
\text{ and }
\Pr[s_{t+1} = \bar{s} | s_{t},\ldots,s_0] = p(s_t, \pi(s_t))_{\bar{s}},\;\forall
t \in \N
\,,\footnote{For notational convenience, we may omit parentheses in subscripts, e.g., we may write $r_{s,a}$ instead of $r_{(s,a)}$.}
\end{equation}
where the transitions are drawn independently from each other.
Let $[v^\M_*]_s \defeq \max_{\pi} [v_{\pi}^\M]_s$ be the \emph{optimal values}.
It is known that $v^\M_* = v^\M_{\pi_*}$ for some policy $\pi_*$ (see, e.g., \cite{Puterman94}); we call such a policy $\pi_*$ \emph{optimal}.

\paragraph{Bellman Operator}
To reason about policies and values,  
we define the \emph{Bellman on-policy operator} $\T^\M_{\pi}: \R^\S \to \R^\S$ with respect to a policy $\pi$ by $[\T^\M_{\pi}(v)]_s \defeq r_{s,\pi(s)} + \gamma \cdot \langle p(s,\pi(s)), v \rangle$, and the \emph{Bellman value operator} $T^\M_*: \R^\S \to \R^\S$ that maximizes over all actions instead of a fixed policy, i.e., $[T^\M_*(v)]_s \defeq \max_{a\in \A_s} r_{s,a} + \gamma \cdot \langle p(s,a), v\rangle$.
It is known that both $\T^\M_{\pi}$ and $T^\M_*$ are contractions and we use the following folklore contraction bounds for $\gamma$-DMDPs throughout the paper.

\begin{restatable}[see, e.g., {\cite[Lemmas 3.4 and 3.5]{sidford2023variance}}]{lemma}{ValueBound}
  Let $\M = (\S, \actionset, p, r)$ be a $\gamma$-DMDP.
  The unique fixpoint of $\T^\M_*$ is $v_*^\M$, and for any values $u, v \in \R^\S$ it holds that
  \[
    \|\T_{*}^{\mdp}(u) - \T_{*}^{\mdp}(v)\|_{\infty} \leq \gamma \left\|u - v\right\|_{\infty}\quad\text{and}\quad
    \|u - v^{\M}_{*}\|_{\infty} \leq \frac{1}{1-\gamma}\left\|u - \T^{\M}_{*}(u)\right\|_{\infty} \,.
  \]
  \label[lemma]{lemma:value-bound}
\end{restatable}

\ifdefined\ARXIV
\begin{proof}
  The first inequality is \cite[Lemma 3.4]{sidford2023variance}.
  The second inequality follows from essentially the same reasoning as in the proof  \cite[Lemma 3.5]{sidford2023variance} (and is a general property of contractions, not just for $\T_*$).
  Applying the first inequality and $\T_*(v_*) = v_*$ yields 
  \begin{align*}
    \|\T_*(u) - v_*\|_{\infty}
    &= \|\T_*(u) - \T_*(v_*)\|_{\infty} = \|\T_*(u) - \T_*(\T_*(u)) + \T_*(\T_*(u)) - \T_*(v_*)\|_{\infty} \\
    &\leq \|\T_*(u) - \T_*(\T_*(u))\|_{\infty} + \|\T_*(\T_*(u)) - \T_*(v_*)\|_{\infty} \\
    &\leq \gamma \|u - \T_*(u)\|_{\infty} + \gamma\|\T_*(u) - v_*\|_{\infty}\,.
  \end{align*}
  This  implies $\|\T_*(u) - v_*\|_{\infty} \leq \frac{\gamma}{1-\gamma}\|u - \T_*(u)\|_{\infty}$ and therefore
  \[
    \|u - v_*\|_{\infty} \leq \|u - \T_*(u)\|_{\infty} + \|\T_*(u) - v_*\|_{\infty} \leq \frac{1}{1-\gamma}\|u - \T_*(u)\|_{\infty}.
    \qedhere
  \]
\end{proof}
\fi

Note that the above inequalities apply to $\T^\M_{\pi}$ and $v^\M_{\pi}$ as well by simply considering the $\gamma$-DMDP with only actions in $\pi$, i.e., 
\begin{equation}
\label{eq:gen_contract}
    \|\T_{\pi}^{\mdp}(u) - \T_{\pi}^{\mdp}(v)\|_{\infty} \leq \gamma \left\|u - v\right\|_{\infty}\quad\text{and}\quad
    \|u - v^{\M}_{\pi}\|_{\infty} \leq \frac{1}{1-\gamma}\left\|u - \T^{\M}_{\pi}(u)\right\|_{\infty} \,.
\end{equation}

\paragraph{Advantage Function Values.}
To reason about the suboptimality of values and actions, we define, 
for values $v\in \R^S$, the \emph{advantage function values} $\slack^\M(v) \in \R^{\actionset}$ with respect to $v$ as $\slack^\M(v)_{s, a} \defeq (r_{s,a}+\gamma \cdot \langle p(s,a), v \rangle) - v_s$ for every $(s, a) \in \actionset$.\footnote{This coincides with the natural notion of \emph{slack} when expressing $\gamma$-DMDPs as a linear program.} 
For brevity, throughout the paper, we often use \emph{advantage(s)} to refer to advantage function value(s). 
Note that $\slack^\M(v)_{s,\pi(s)} = [\T^\M_{\pi}(v)]_s - v_s$ by definition. 
Additionally, since $\vstar^\M$ is the unique fixpoint of $\T_*^\M$ (\cref{lemma:value-bound}) and $\T_*^\M$ maximizes over all actions, $\slack^\M(v^\M_*)_{s,a} \leq 0$, i.e., all actions have non-positive advantage at $v^\M_*$. 
Moreover, since $v_{\pi}^\M$ is the unique fixpoint of $\T_{\pi}^\M$, the policy $\pi$ is optimal if and only if every action has zero advantage with respect to $\vstar^\M$, i.e., $\slack^\M(\vstar^\M)_{s,\pi(s)} = 0$ for all $s \in \S$.

\paragraph{Discarding Actions.}
For $\gamma$-DMDP $\M = (\S, \actionset, p, r)$ and $R \subseteq \actionset$, we write $\M \setminus R$ as removing the actions $R$ from $\M$.
More precisely, it is the $\gamma$-DMDP $\M^\prime = (\S, \actionset \setminus R, p^\prime, r^\prime)$ such that $p^\prime$ and $r^\prime$ are $p$ and $r$ restricted to $\actionset\setminus R$, respectively.

\paragraph{Notational Simplification.} Throughout the paper, when the $\gamma$-DMDP $\M$ is clear from context, we may omit $\M$ in the superscript in our notation (e.g., we may write $v_*$ instead of $v_*^\M$).

\section{From Approximate Values to Exact Policies} \label{sec:reduction}

In this section, we present our general framework (\Cref{alg:reduction}) for solving $\gamma$-DMDPs by reducing the problem to policy evaluations and computing approximate values (as mentioned in \Cref{sec:results}). We first analyze the framework and then instantiate it to prove \cref{thm:det-reduction}. 

At a high level, our framework for solving $\gamma$-DMDPs iteratively discards \emph{suboptimal actions}, i.e., actions not chosen by any optimal policy. This is a common technique for obtaining strongly polynomial time algorithms. For example, in linear programming, irrelevant constraints are discarded (see, e.g.,~\cite{DadushNV20}), and
in submodular function minimization, 
elements that are deduced to not be in any minimizers are discarded (see, e.g., \cite{DVZ21}).
For $\gamma$-DMDPs, the IPM of \cite{Ye05} discards suboptimal actions, and  \cite{Ye11a} argues that iterations of policy iteration will no longer consider certain suboptimal actions after a few iterations. 

To discard suboptimal actions, our framework starts from a policy $\pi$, computes $\epsilon$-optimal values $v$, for a suitable value of $\epsilon$, and then considers the advantage function $\slack(v)_{s, a} = (r_{s,a} + \gamma \cdot \langle p(s,a), v\rangle) - v_s$ with respect to $v$ to detect such suboptimal actions. 
In particular, using known arguments for reasoning about $\gamma$-DMDPs we can show that if $\slack(v)_{s, a}$ is sufficiently negative, i.e., less than $- \epsilon (1+\gamma)$, then no optimal policy $\pi_*$ contains $(s, a)$. 
To reason about what value of $\epsilon$ suffices to guarantee that at least one $(s, a)$ satisfies $\slack(v)_{s, a} < - \epsilon (1+\gamma)$ and can thus be discarded, we follow an 
analysis similar to \cite{Scherrer13} and note that for any policy $\pi$, 
we can relate the most negative $\slack(\vstar)_{s, \pi(s)}$ to the
suboptimality $\|v_{\pi} - \vstar\|_{\infty}$ of $\pi$ (see, e.g., \Cref{lemma:value-bound}). 
Consequently, by relating $\slack(\vstar)_{s, \pi(s)}$ to $\slack(v)_{s, \pi(s)}$, we observe that it suffices to compute $\epsilon$-optimal values $v$ for $\epsilon$ on the order of the suboptimality of $\pi$ up to $\poly(1-\gamma)$ factors (see \Cref{lemma:discard}). 

To compute $\epsilon$-optimal values $v$ for $\epsilon \le \|v_{\pi} - \vstar\|_{\infty} \cdot \poly(1-\gamma)$, we develop and apply an algorithm \texttt{MultValApprox},
 which can be thought of as an algorithm for computing approximate values to multiplicative accuracy relative to the suboptimality of $\pi$. 
\texttt{MultValApprox} first computes $v_{\pi}$. 
It is known that evaluating the values $v_{\pi}$ can be done in $O(\Stot^{\omega})$ by solving a linear system.
\begin{fact}[Policy Evaluation in Fast Matrix Multiplication Time] There is an algorithm that, given any policy $\pi$ in $\gamma$-DMDP $\M = (\S, \actionset, p, r)$, outputs $v^{\M}_{\pi}$ in $O(\Stot^{\omega})$ time.
  \label[fact]{fact:compute-value}
\end{fact}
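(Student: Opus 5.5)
The plan is to write policy evaluation as a single linear system and solve it with fast matrix multiplication. Fix the policy $\pi$. Define $P_\pi \in \R^{\S \times \S}$ as the matrix whose row $s$ is $p(s,\pi(s))$, and $r_\pi \in \R^{\S}$ by $[r_\pi]_s = r_{s,\pi(s)}$. Both can be read off from the input in $O(\Stot^2)$ time. From \eqref{eq:val_vec}, $v^\M_\pi$ is the unique fixpoint of $\T^\M_\pi$ (unique by \eqref{eq:gen_contract}, since $\T^\M_\pi$ is a $\gamma$-contraction). That fixpoint equation is $v = r_\pi + \gamma P_\pi v$, that is,
\[
(I - \gamma P_\pi)\, v^\M_\pi = r_\pi .
\]

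The next step is to check that $I - \gamma P_\pi$ is invertible. Each row of $P_\pi$ is a probability vector, so $\|P_\pi\|_{\infty \to \infty} = 1$ and $\|\gamma P_\pi\|_{\infty\to\infty} = \gamma < 1$. The Neumann series $\sum_{t \ge 0} \gamma^t P_\pi^t$ therefore converges to $(I-\gamma P_\pi)^{-1}$. As a consistency check, applying this series to $r_\pi$ gives exactly the expected discounted sum in \eqref{eq:val_vec}. Since the matrix is nonsingular, $v^\M_\pi = (I-\gamma P_\pi)^{-1} r_\pi$ is the unique solution of the system.

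To finish, I will invoke the standard fact that an $n\times n$ nonsingular system can be solved, or the matrix inverted, in $O(n^\omega)$ arithmetic operations. This follows from the classical reduction of inversion to matrix multiplication via block LU / Schur complements (Bunch–Hopcroft, or Strassen's recursive inversion). Apply it with $n = \Stot$, then multiply the inverse by $r_\pi$ in $O(\Stot^2)$ time. The total cost is $O(\Stot^2 + \Stot^\omega) = O(\Stot^\omega)$.

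The only point needing care is the reduction of inversion to multiplication. Plain recursive block inversion requires the leading principal blocks to be invertible, which is not automatic for a general matrix. Here it does hold, because $I-\gamma P_\pi$ is strictly row diagonally dominant: the diagonal entry is $1-\gamma p_{ss} \ge 1 - \gamma + \gamma\sum_{s'\ne s} p_{ss'}$, which strictly exceeds the off-diagonal absolute row sum $\gamma\sum_{s'\ne s}p_{ss'}$. Strict diagonal dominance is inherited by principal submatrices and by Schur complements, so the recursion never encounters a singular pivot block and needs no pivoting. With this observation, the $O(\Stot^\omega)$ bound follows directly.
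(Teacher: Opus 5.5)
Your proposal is correct and takes the same route as the paper: write $v^\M_\pi = (I-\gamma P_\pi)^{-1} r_\pi$ and solve the resulting $\Stot \times \Stot$ linear system in $O(\Stot^\omega)$ time. Your extra justifications are valid refinements of steps the paper simply cites: invertibility via the Neumann series, and strict row diagonal dominance so that recursive block inversion needs no pivoting. The second is not strictly needed, since the Bunch--Hopcroft LUP decomposition already handles any nonsingular matrix in $O(n^\omega)$ time.
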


\begin{proof}
 It is known that $v_{\pi}^\M = (\mathbf{I} - \gamma \P_{\pi})^{-1} r_{\pi}$, where $\P_{\pi} \in \R_{\geq 0}^{\S \times \S}$ is such that $[\P_{\pi}]_{ij} \defeq p(i,\pi(i))_j$ for all $i, j \in \S$ and $r_{\pi} \in \R^\S$ is such that $[r_{\pi}]_i \defeq r_{i,\pi(i)}$ for all $i \in \S$ (see, e.g., \cite{Puterman94}). 
 Thus, $v_{\pi}^{\M}$ can be computed by solving a linear system which takes $O(\Stot^\omega)$ time.
\end{proof}
After the evaluation, \texttt{MultValApprox} discards actions with very negative advantages, meaning advantages more negative than $- \slack_{\max}^{\M}(v_{\pi}) \cdot (1+\gamma)$, where 
\begin{equation}
\label{def:Delta_max}
    \slack_{\max}^{\M}(v_{\pi}) \defeq \max_{(s,a)\in \actionset} \slack^\M(v_{\pi})_{s,a} 
\end{equation}
as they are not chosen by any optimal policy (see \Cref{claim:max-min-vals}).
Next, it shifts rewards using $v_\pi$ by considering a new $\gamma$-DMDP $\M'$ where the rewards of remaining actions are replaced by the advantages $\slack^{\M}(v_{\pi})$. Finally, it 
applies a
$\delta_{\gamma}$-approximate $\gamma$-DMDP algorithm $\mathsf{ApxALG}$ for $\delta_{\gamma} = \poly(1-\gamma)$ to $\M'$, which yields $\epsilon$-optimal values $v$ for $\M$ for explicit $\eps = O(\Delta_{\max}^\M(v_{\pi}) \cdot \poly(1-\gamma)) = O(\|v_{\pi} - \vstar\|_{\infty} \cdot \poly(1-\gamma))$ (see \Cref{lemma:solve-to-scale}). 

\begin{algorithm}[htp!]
\caption{Solving $\gamma$-DMDPs Exactly Using an Approximate $\gamma$-DMDP Solver}
\label{alg:reduction}
\SetKwProg{Fn}{Function}{:}{}
\SetKwFunction{apxval}{MultValApprox}
\SetKwFunction{solve}{SolveDMDP}
\SetKwRepeat{Do}{do}{while}

\SetEndCharOfAlgoLine{}

\Fn{\solve{$\M$}} {
\For{$t = 0, 1, \ldots$ \label{line:main_for}}{
  \tcp{In the randomized reductions considered in \Cref{sec:rand_reduction}, we sample a uniformly random $\pi$.}
  Select a policy $\pi$ of $\M$.\;\label{line:policy}
  \BlankLine
  \tcp{Compute $\eps$-optimal values for $\eps \leq \frac{1-\gamma}{3(1+\gamma)} \cdot \|v_{\pi} - \vstar\|_{\infty}$ (see \cref{sec:shifting}).}
  $(v, \eps) \gets$ \apxval{$\M, \pi$}. \label{line:aprox_val_solve}\;
  \BlankLine
  \tcp{Use $v$ to detect and discard suboptimal actions, $D$.}
  $\M \gets \M \setminus D$  where $D = \{(s,a)\in \actionset: \slack^\M(v)_{s,a} < - \epsilon (1+\gamma)\}\}$. \label{line:discard_actions} \tcp*{discard $D$}

  \BlankLine
  \tcp{If no actions are discarded, the policy must be optimal.}
  \lIf(\tcp*[f]{For analysis define $T \defeq t$ on this line}){$D = \emptyset$}{\Return $\pi$}
}
 }

\;

\Fn{\apxval{$\M, \pi$}}{

  $X^{\pi} \gets \{(s,a)\in \actionset: \slack^\M(v_{\pi})_{s,a} < -\slack_{\max}^{\M}(v_{\pi}) \cdot (1+\gamma)\}$. \tcp*{actions to ignore}
  Compute $v^\prime\gets\mathsf{ApxALG}(\M[\slack^\M(v_{\pi})]\setminus X^{\pi}, \delta_\gamma)$ where $\delta_\gamma \defeq \frac{1-\gamma}{3(1+\gamma)^2}$, where\;
  $\mathsf{ApxALG}$ is a $\delta_{\gamma}$-approximate $\gamma$-DMDP solver (see \cref{def:apx_values}).
  
  \Return{$(v, \eps)$ where $v \gets v^\prime + v^\M_{\pi}$ and $\eps \gets \slack_{\max}^{\M}(v_{\pi}) \cdot \frac{1-\gamma}{3(1+\gamma)}$.
  }
}
\end{algorithm}

In the remainder of this section we break down our analysis of \cref{alg:reduction} (and the use of it to prove \cref{thm:det-reduction}) 
into multiple parts. 
First, in \cref{sec:discard}, we prove \cref{lemma:discard}, which shows that solving the $\gamma$-DMDP to an accuracy proportional to the suboptimality of a policy $\pi$, meaning $\|v_* - v_{\pi}\|_{\infty}$, suffices to detect actions that have a negative enough advantage and can thus be discarded (as they are provably not part of any optimal solution). 
In \cref{sec:shifting} we then analyze \texttt{MultValApprox}. 
Finally, in \cref{sec:combine}, we leverage the two parts to prove the correctness of our reduction framework and bound the running time of each iteration (see \cref{thm:framework_correctness}). Finally, \cref{thm:det-reduction} follows from this analysis straightforwardly.

\subsection{Detecting Suboptimal Actions} \label{sec:discard}

In this section, we state and prove \cref{lemma:discard} which provides a bound on the accuracy to which it suffices to solve for the approximately optimal values to detect suboptimal actions. Similar bounds and analysis appeared in \cite{Scherrer13}, and we provide a self-contained proof for completeness and to enable our specific algorithms. 

Our analysis, \cite{Scherrer13}, and \cref{lemma:discard} all connect the suboptimality of a policy $\pi$ to a key quantity $\slack_{*, \min}^{\M}(\pi)$ defined as the minimum advantage value of the optimum values for a state-action pair in $\pi$, i.e., 
\begin{equation}\label{eq:z_pi}
    \slack_{*, \min}^{\M}(\pi) \defeq \min_{s\in \S}\slack^\M(v^\M_*)_{s,\pi(s)}
\end{equation}
Although we do not compute $\slack_{*, \min}^{\M}(\pi)$ exactly in \cref{alg:reduction}, we use it in our analysis to bound the progress made by our algorithm in discarding actions. 
In particular, we leverage in the analysis of our framework (\cref{alg:reduction}) that when $v, \pi$ satisfy that
$\|v - v^\M_*\|_{\infty}$ is small enough relative to $\|v^\M_{\pi} - v^\M_*\|_{\infty}$, any
$(s,a) \in \actionset$ with $\slack^\M(v^\M_*)_{s, a} \leq \slack_{*, \min}^{\M}(\pi)$ 
will be discarded as it does not belong to 
the set of optimal actions. Additionally, \Cref{lemma:discard} implies that solving up to accuracy $\eps < \|v^\M_{\pi} - v^\M_*\|_\infty \cdot \frac{1-\gamma}{2(1+\gamma)}$ suffices to be able to discard at least one action this way.

\begin{lemma}[\cite{Scherrer13}]
  \label[lemma]{lemma:discard}
  Let $v \in \R^\S$ be $\eps$-optimal for $\gamma$-DMDP $\M = (\S, \actionset, p, r)$. 
  If $\slack^\M(v)_{s,a} < -\eps (1+\gamma)$ for some $(s, a) \in \actionset$, 
  no optimal policy $\pi_*$ of $\M$ has $\pi_*(s) = a$.
  Moreover, if $\eps < \|v^\M_{\pi} - v^\M_*\|_\infty \cdot \frac{1-\gamma}{2(1+\gamma)}$, then any 
  $(s,a)\in\actionset$ with $\slack^\M(v^\M_*)_{s,a} \leq \slack_{*, \min}^{\M}(\pi)$ will satisfy $\slack^\M(v)_{s,a} < -\eps (1+\gamma)$. 
\end{lemma}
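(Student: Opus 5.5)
The plan is to relate the advantage of $(s,a)$ at $v$ to its advantage at $v_*$ via a perturbation bound, and then to lower bound $|\slack_{*,\min}^{\M}(\pi)|$ by the suboptimality of $\pi$ using \Cref{lemma:value-bound}.

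\textbf{Perturbation bound.} For any $u,w\in\R^\S$ and any $(s,a)\in\actionset$,
\[
\slack(u)_{s,a}-\slack(w)_{s,a}=\gamma\langle p(s,a),u-w\rangle-(u_s-w_s).
\]
Since $p(s,a)$ is a probability vector, this gives $|\slack(u)_{s,a}-\slack(w)_{s,a}|\le(1+\gamma)\|u-w\|_\infty$. Applying it with $u=v$ and $w=v_*$, where $\|v-v_*\|_\infty\le\eps$, we obtain
\[
|\slack(v)_{s,a}-\slack(v_*)_{s,a}|\le(1+\gamma)\eps .
\]

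\textbf{First claim.} Suppose $\slack(v)_{s,a}<-\eps(1+\gamma)$. The perturbation bound then gives $\slack(v_*)_{s,a}<0$. As recalled in the preliminaries, any optimal $\pi_*$ satisfies $\slack(v_*)_{s,\pi_*(s)}=0$ for all $s$, because $v_*=v_{\pi_*}$ is the fixpoint of $\T_{\pi_*}$. Hence no optimal $\pi_*$ has $\pi_*(s)=a$.

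\textbf{Second claim.} It suffices to show
\[
\slack_{*,\min}(\pi)\le-(1-\gamma)\|v_\pi-v_*\|_\infty .
\]
Given this, any $(s,a)$ with $\slack(v_*)_{s,a}\le\slack_{*,\min}(\pi)$ satisfies
\[
\slack(v)_{s,a}\le\slack(v_*)_{s,a}+(1+\gamma)\eps\le-(1-\gamma)\|v_\pi-v_*\|_\infty+(1+\gamma)\eps .
\]
The hypothesis $\eps<\|v_\pi-v_*\|_\infty\frac{1-\gamma}{2(1+\gamma)}$ means $(1-\gamma)\|v_\pi-v_*\|_\infty>2(1+\gamma)\eps$. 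Substituting yields $\slack(v)_{s,a}<-(1+\gamma)\eps$, as required.

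\textbf{Key inequality.} Apply the second inequality of \eqref{eq:gen_contract} with $u=v_*$:
\[
\|v_*-v_\pi\|_\infty\le\frac1{1-\gamma}\|v_*-\T_\pi(v_*)\|_\infty .
\]
Coordinatewise, $[\T_\pi(v_*)]_s-[v_*]_s=\slack(v_*)_{s,\pi(s)}\le0$, so
\[
\|\T_\pi(v_*)-v_*\|_\infty=\max_s\bigl(-\slack(v_*)_{s,\pi(s)}\bigr)=-\slack_{*,\min}(\pi).
\]
Combining the two displays gives exactly $(1-\gamma)\|v_\pi-v_*\|_\infty\le-\slack_{*,\min}(\pi)$.

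The only step that is not routine is this use of the contraction bound to connect the suboptimality of $\pi$ to the single most negative optimal advantage of an action in $\pi$. Everything else is the triangle inequality, so I expect no real obstacle beyond checking the signs.
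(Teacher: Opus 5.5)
Your proposal is correct and follows the paper's proof essentially step for step. The paper also uses the $(1+\gamma)\|u-w\|_\infty$ perturbation bound on advantages (its \Cref{lemma:slack-diff}), and obtains $\slack_{*,\min}(\pi)\le-(1-\gamma)\|v_\pi-v_*\|_\infty$ by applying the contraction bound \eqref{eq:gen_contract} at $u=v_*$ (its \Cref{lemma:large-slack-opt}), then combines the two by the same triangle-inequality argument.
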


To prove \Cref{lemma:discard}, 
we provide two helper lemmas, \Cref{lemma:slack-diff,lemma:large-slack-opt}.

\begin{lemma}  \label[lemma]{lemma:slack-diff}
  For a $\gamma$-DMDP $\M = (\S, \actionset, p, r)$ and any $u, v \in \R^\S$ and  $(s, a) \in \actionset$,
  \[\left|\slack^\M(u)_{s, a} - \slack^\M(v)_{s,a}\right| \leq (1 + \gamma) \|u - v\|_{\infty}.\] 
\end{lemma}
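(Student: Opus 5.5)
The plan is to expand the definition of the advantage function and subtract. By definition,
\[
\slack^\M(u)_{s,a} - \slack^\M(v)_{s,a} = \gamma \langle p(s,a), u - v\rangle - (u_s - v_s),
\]
since the reward term $r_{s,a}$ appears in both advantages and cancels exactly. So the whole argument reduces to bounding the two remaining terms separately.

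For the first term, I will use that $p(s,a) \in \slack^\S$, i.e., it is entrywise nonnegative with $\|p(s,a)\|_1 = 1$. Hölder's inequality then gives
\[
|\langle p(s,a), u - v\rangle| \le \|p(s,a)\|_1 \|u - v\|_\infty = \|u-v\|_\infty.
\]
For the second term, trivially $|u_s - v_s| \le \|u - v\|_\infty$.

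Combining these with the triangle inequality gives the bound $(1+\gamma)\|u - v\|_\infty$. There is no real obstacle here; the only point to check is that the rewards cancel exactly, which holds because both advantages are taken in the same $\gamma$-DMDP $\M$ at the same pair $(s,a)$.
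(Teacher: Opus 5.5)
Your proposal is correct and follows essentially the same argument as the paper. The paper also cancels the reward term, bounds $\gamma\,|\langle p(s,a), u-v\rangle|$ using $\|p(s,a)\|_1 = 1$, bounds $|u_s - v_s|$ by $\|u-v\|_\infty$, and combines the two with the triangle inequality.
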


\begin{proof}
  It follows from the definition of $\slack^\M(u)_{s,a}$ 
  that 
  \begin{align*}
    \left|\slack(u)_{s,a} - \slack(v)_{s,a}\right|
    &=
    \left|\left(r_{s,a} + \gamma \cdot \langle p(s, a), u \rangle - u_s \right) - \left(r_{s,a} + \gamma \cdot \langle p(s,a), v \rangle - v_s \right)\right| \\
    &= |\gamma \cdot \langle p(s, a), u - v \rangle - u_s + v_s|
    \leq \gamma \cdot |\langle p(s, a), u - v\rangle| + |u_s - v_s| \\
    &\leq (1 + \gamma)\|u - v\|_{\infty},
  \end{align*}
  where in the last inequality we used that $\langle p(s, a), u - v \rangle \le \|p(s, a)\|_1 \cdot \|u - v\|_{\infty} = \|u - v\|_{\infty}$ and that $|u_s - v_s| \leq \|u - v\|_{\infty}$.
\end{proof}

\begin{lemma}  \label[lemma]{lemma:large-slack-opt}
    For any policy $\pi$ in $\gamma$-DMDP $\M = (\S, \actionset, p, r)$, $\slack_{*, \min}^{\M}(\pi) \leq -(1-\gamma)\|v_\pi^{\M} - v_*^{\M}\|_\infty$. 
\end{lemma}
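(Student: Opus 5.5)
The plan is to apply the policy version of the contraction bound \eqref{eq:gen_contract} with $u = v^\M_*$. That inequality gives
\[
  \|v^\M_* - v^\M_\pi\|_\infty \leq \frac{1}{1-\gamma}\,\|v^\M_* - \T^\M_\pi(v^\M_*)\|_\infty .
\]
So the whole task reduces to identifying the vector $\T^\M_\pi(v^\M_*) - v^\M_*$ in terms of advantages.

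By the definition of the advantage function, each coordinate of this vector is
\[
  [\T^\M_\pi(v^\M_*)]_s - [v^\M_*]_s = \slack^\M(v^\M_*)_{s,\pi(s)} .
\]
Earlier in the preliminaries it was noted that every action has non-positive advantage at $v^\M_*$, since $v^\M_*$ is the fixpoint of $\T^\M_*$ and $\T^\M_*$ maximizes over actions. Hence all coordinates of this vector are $\le 0$. Its $\ell_\infty$ norm is therefore the largest magnitude among them:
\[
  \max_{s\in\S}\bigl|\slack^\M(v^\M_*)_{s,\pi(s)}\bigr| = -\min_{s\in\S}\slack^\M(v^\M_*)_{s,\pi(s)} = -\slack_{*,\min}^{\M}(\pi).
\]

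Substituting this into the contraction bound gives $\|v^\M_\pi - v^\M_*\|_\infty \le -\slack_{*,\min}^{\M}(\pi)/(1-\gamma)$. Rearranging yields exactly the claim $\slack_{*,\min}^{\M}(\pi) \le -(1-\gamma)\|v^\M_\pi - v^\M_*\|_\infty$.

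There is no real obstacle here. The only points needing care are the sign bookkeeping (that the norm equals minus the minimum because all entries are non-positive) and the justification that \eqref{eq:gen_contract} holds for $\T^\M_\pi$. The latter is already given in the paper by restricting $\M$ to the actions of $\pi$.
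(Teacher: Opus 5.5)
Your proposal is correct and matches the paper's proof: apply \eqref{eq:gen_contract} with $u = v^\M_*$, identify $[\T^\M_\pi(v^\M_*)]_s - [v^\M_*]_s$ with $\slack^\M(v^\M_*)_{s,\pi(s)}$, and use that these advantages are non-positive, so the $\ell_\infty$ norm equals $-\slack_{*,\min}^{\M}(\pi)$. You also state the non-positivity step explicitly, which the paper's proof uses without comment.
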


\begin{proof}
    Note that $\slack(\vstar)_{s, \pi(s)} =  r_{s, \pi(s)} + \gamma \cdot \langle p(s, \pi(s)), \vstar \rangle - [\vstar]_s = [\T_{\pi}(\vstar)]_s - [\vstar]_s$. Consequently, using \eqref{eq:gen_contract}, which, as discussed in \Cref{sec:prelim}, follows from \cref{lemma:value-bound}, we obtain 
    \[ \|v_{\pi} - \vstar\|_{\infty} \leq \frac{1}{1-\gamma} \|\vstar - \T_{\pi}(\vstar)\|_\infty = \frac{1}{1-\gamma} \max_{s \in S} |\slack(\vstar)_{s, \pi(s)}|\,.
    \]
    Rearranging and using that $\max_{s\in S}|\slack(\vstar)_{s,\pi(s)}| = -\slack_{*, \min}(\pi)$ proves the lemma.
\end{proof}

We are now ready to prove \cref{lemma:discard}.

\begin{proof}[Proof of \cref{lemma:discard}]
    By \cref{lemma:slack-diff}, for any $(s, a) \in \actionset$, we have
    \[|\slack(v_*)_{s, a} - \slack(v)_{s, a}| \leq (1 + \gamma) \|\vstar - v\|_{\infty} \le (1 + \gamma) \eps.\]
    Hence, $\slack(v)_{s, a} < -\eps (1+\gamma)$ implies $\slack(v_*)_{s, a} < 0$, and thus no optimal policy $\pi$ of $\M$ has $\pi(s) = a$. 
    Next, by \cref{lemma:large-slack-opt}, any action $(s, a)$ with $\slack(v_*)_{s, a} \leq \slack_{*, \min}(\pi)$ has $\slack(v_*)_{s, a} \leq -\eta (1-\gamma)$.
    This in turns by \cref{lemma:slack-diff} shows that $\slack(v)_{s, a} \leq -\eta (1-\gamma) + (1+\gamma) \eps$. 
    For $\eps < \eta \cdot \frac{1-\gamma}{2(1+\gamma)}$, we have $-\eta  (1-\gamma) + (1+\gamma)\eps < -\eps (1+\gamma)$ which concludes the proof. 
\end{proof}

\subsection{Approximate Values via Reward Shifting} \label{sec:shifting}

In this section, we state and prove \cref{lemma:solve-to-scale}, which 
provides the correctness and efficiency guarantee of  \texttt{MultValApprox}, whose pseudocode was given in \cref{alg:reduction}. 
As discussed, this subroutine is applied to obtain approximate values $v$ of a $\gamma$-DMDP to the accuracy required by \cref{lemma:discard}. 

\begin{restatable}{lemma}{SolveToScale}
  Given a policy $\pi$ of $\gamma$-DMDP $\M = (\S, \actionset, p, r)$, 
  the \emph{\texttt{MultValApprox}} subroutine in \cref{alg:reduction} computes $\eps \leq \|v_{\pi}^\M - \vstar^\M\|_{\infty} \cdot \frac{1-\gamma}{3(1+\gamma)}$ and $\eps$-optimal values $v$ and can be implemented in deterministic 
  $O(\Stot^{\omega} + \Stot\Atot) + \T_{\Det}(\Stot, \Atot, \allowbreak \Theta(1-\gamma))$ time and can be implemented in randomized $O(\Stot^{\omega} + \Stot\Atot) + \T_{\Rnd}(\Stot, \Atot, \Theta(1-\gamma))$ expected time. 
  \label[lemma]{lemma:solve-to-scale}
\end{restatable}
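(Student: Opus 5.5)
We will verify three things about \texttt{MultValApprox}: the bound on $\eps$, the $\eps$-optimality of $v$, and the running time. Throughout, write $\Delta_{\max} \defeq \slack^\M_{\max}(v_\pi)$ and $\M' \defeq \M[\slack^\M(v_\pi)]\setminus X^\pi$.

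\textbf{Bound on $\eps$.} We first show $\Delta_{\max} \le \|v_\pi - v_*\|_\infty$. Since $\slack(v_\pi)_{s,\pi(s)} = [\T_\pi(v_\pi)]_s - [v_\pi]_s = 0$, we have $\Delta_{\max}\ge 0$. Next, $\Delta_{\max} = \|\T_*(v_\pi) - v_\pi\|_\infty$, because $\max_a \slack(v_\pi)_{s,a} = [\T_*(v_\pi)]_s - [v_\pi]_s \ge 0$ for each $s$. Using $\T_*(v_*)=v_*$ and the contraction in \cref{lemma:value-bound},
\[
\|\T_*(v_\pi) - v_\pi\|_\infty \le \|\T_*(v_\pi)-v_*\|_\infty + \|v_*-v_\pi\|_\infty \le (1+\gamma)\|v_\pi - v_*\|_\infty .
\]
This gives only a factor $1+\gamma$. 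To remove it, use monotonicity: $\T_*(v_\pi)\ge \T_\pi(v_\pi)=v_\pi$ and $v_*\ge \T_*(v_\pi)$ (iterating the monotone $\T_*$ from $v_\pi$ increases to $v_*$). Hence $0\le \T_*(v_\pi)-v_\pi \le v_*-v_\pi$, so $\Delta_{\max}\le \|v_\pi-v_*\|_\infty$ and $\eps \le \|v_\pi - v_*\|_\infty\cdot\frac{1-\gamma}{3(1+\gamma)}$.

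\textbf{Reward shifting and discarding $X^\pi$.} The DMDP $\M[\slack(v_\pi)]$ has rewards $r'_{s,a} = r_{s,a}+\gamma\langle p(s,a),v_\pi\rangle - [v_\pi]_s$. A direct computation shows $\T'_*(u) = \T_*(u+v_\pi) - v_\pi$, so $v'_* = v_* - v_\pi$. Hence if $v'$ is $\eps'$-optimal for the shifted MDP, then $v'+v_\pi$ is $\eps'$-optimal for $\M$. Removing $X^\pi$ does not change the optimal values, provided every action in $X^\pi$ is suboptimal and each state retains an action. This is the claim referenced as \Cref{claim:max-min-vals}, which I would prove as follows. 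Since $\|v_\pi - v_*\|_\infty \le \frac{1}{1-\gamma}\Delta_{\max}$ by \cref{lemma:value-bound}, \cref{lemma:slack-diff} gives
\[
\slack(v_*)_{s,a} \le \slack(v_\pi)_{s,a} + (1+\gamma)\|v_\pi-v_*\|_\infty .
\]
This needs $\slack(v_\pi)_{s,a} < -\frac{1+\gamma}{1-\gamma}\Delta_{\max}$, which is a weaker cut than the one used in the algorithm. So I would sharpen it with monotonicity: $v_\pi\le v_*\le v_\pi + \frac{\Delta_{\max}}{1-\gamma}\mathbf{1}$. The componentwise argument then gives
\[
\slack(v_*)_{s,a} - \slack(v_\pi)_{s,a} = \gamma\langle p, v_*-v_\pi\rangle - (v_*-v_\pi)_s \le \gamma\cdot\tfrac{\Delta_{\max}}{1-\gamma} .
\]
I expect reconciling this with the stated threshold $(1+\gamma)\Delta_{\max}$ to be the main obstacle. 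If the exact constants do not match, I would fall back on whatever threshold the referenced claim certifies. The actions of $\pi$ itself have advantage $0$, so they are never removed, and each state keeps an action.

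\textbf{Accuracy and running time.} The remaining rewards satisfy $\|r'\|_\infty \le \max\{\Delta_{\max}, (1+\gamma)\Delta_{\max}\} = (1+\gamma)\Delta_{\max}$. Therefore $\mathsf{ApxALG}$ returns values with error at most
\[
\delta_\gamma (1+\gamma)\Delta_{\max} = \frac{(1-\gamma)\Delta_{\max}}{3(1+\gamma)} = \eps .
\]
For the running time, computing $v_\pi$ costs $O(\Stot^\omega)$ by \cref{fact:compute-value}. Computing all advantages, $\Delta_{\max}$, $X^\pi$, and building $\M'$ costs $O(\Stot\Atot)$. Finally, one call to $\mathsf{ApxALG}$ with $\delta_\gamma = \Theta(1-\gamma)$ costs $\T_{\Det}$ or, in expectation, $\T_{\Rnd}$, since $\M'$ has at most $\Stot$ states and $\Atot$ actions.
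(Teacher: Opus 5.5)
Your bound on $\eps$ (via $0\le \T_*(v_\pi)-v_\pi\le \vstar-v_\pi$) and your shift identity $\T'_*(u)=\T_*(u+v_\pi)-v_\pi$ are correct. They differ only cosmetically from the paper, which uses the one-switch policy in $\M^\pi$ (\cref{claim:bound}) and \cref{lemma:value-shift}. The genuine gap is the one you flagged: you never show that deleting $X^\pi$ leaves $\vstar^{\M^\pi}$ unchanged. Deferring to ``whatever threshold the referenced claim certifies'' cannot close it, because that claim is false for large $\gamma$. The paper's proof of \cref{claim:max-min-vals} applies the first part of \cref{lemma:discard} to $v=v_\pi$ with accuracy $\Delta_{\max}$, which would need $\|v_\pi-\vstar\|_\infty\le\Delta_{\max}$. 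But \cref{claim:bound} gives the reverse inequality, and the valid upper bound is only $\Delta_{\max}/(1-\gamma)$.

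Concretely, take $\gamma=0.9$ and states $1,2,3$. Let $\pi$ use reward-$0$ actions $1\to 3$, $2\to 2$, $3\to 3$, and add only $b$ at state $2$ (reward $1$, self-loop) and $a$ at state $1$ (reward $-5$, moving to state $2$). Then $v_\pi=0$, advantages equal rewards, $\Delta_{\max}=1$, and $X^\pi=\{(1,a)\}$ since $-5<-1.9$. Yet $[\vstar]_2=10$ and $[\vstar]_1=\max\{0,\,-5+0.9\cdot 10\}=4$, so $(1,a)$ is the unique optimal action at state $1$. Once it is deleted, the optimum at state $1$ is $0$. The returned $v=v'+v_\pi$ is then off by almost $4$ there, while $\eps=0.1/5.7<0.02$. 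The same construction with reward for $a$ anywhere in $[-\frac{\gamma}{1-\gamma},-(1+\gamma))$ (keeping $b$'s reward $1$) refutes \cref{claim:max-min-vals} for every $\gamma>(\sqrt{5}-1)/2$, i.e., whenever $\frac{\gamma}{1-\gamma}>1+\gamma$.

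Your sharpened estimate $\Delta(\vstar)_{s,a}\le \Delta(v_\pi)_{s,a}+\frac{\gamma}{1-\gamma}\Delta_{\max}$ is the right tool, and it is tight. In the example, $(1,a)$ stays optimal for every reward in $[-9,0]$, and $9=\frac{\gamma}{1-\gamma}\Delta_{\max}$. For $\gamma\le(\sqrt{5}-1)/2$ we have $1+\gamma\ge\frac{\gamma}{1-\gamma}$, so your estimate already certifies the algorithm's pruning and your proof goes through as written.

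For larger $\gamma$ the repair must be made in the subroutine, not the proof. Prune at $-\frac{\gamma}{1-\gamma}\Delta_{\max}$, which is safe by your estimate; the surviving rewards then satisfy $\|r'\|_\infty\le\frac{\Delta_{\max}}{1-\gamma}$. Call $\mathsf{ApxALG}$ with $\delta=\frac{(1-\gamma)^2}{3(1+\gamma)}$, which gives error at most $\frac{(1-\gamma)\Delta_{\max}}{3(1+\gamma)}=\eps$. Everything else in your argument, including the running-time accounting, is unchanged, except that $\Theta(1-\gamma)$ becomes $\Theta((1-\gamma)^2)$ in the solver's accuracy. For the solvers used in the paper, whose cost depends on $\log(1/\delta)$, this only affects constants.
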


To motivate the \texttt{MultValApprox} subroutine, note that it is not immediately clear how to efficiently run the approximate $\gamma$-DMDP solvers on the original $\gamma$-DMDP instance to accuracy $\|v_{\pi} - v_*\|_{\infty} \cdot \poly(1-\gamma)$ to enable applying \cref{lemma:discard}. 
This is because the ratio $\frac{\|r\|_{\infty}}{\|v_{\pi} - \vstar\|_{\infty}}$ can be very large, and the running time of the approximate solver depends on this ratio.

To bypass this obstacle, the \texttt{MultValApprox} subroutine appropriately modifies the rewards in the $\gamma$-DMDP instance that the solvers run on (as aforementioned). 
In particular, the first step of \texttt{MultValApprox} 
is to shift the reward so that the actions taken by the policy $\pi$ become the ``baseline'', i.e., have value $0$.
This is done by using the advantage function $\slack(v_{\pi})$ with respect to $v_{\pi}$ as the reward instead.
Note that given $\pi$, $\slack(v_{\pi})_{s,a}$ can be computed in $O(\Stot^{\omega} + \Stot\Atot)$ time for all $(s, a) \in \actionset$ using \Cref{fact:compute-value}. 
The shifted MDP is formally defined as follows.

\begin{definition}[Shifted MDP]
  For a $\gamma$-DMDP $\M = (\S, \A, p, r)$ and a policy $\pi$ of $\M$, $M^\pi \defeq (\S, \A, p, \slack^\M(v_{\pi}^\M))$ is the $\gamma$-DMDP that has the same states, actions, and transition probabilities as $\M$, but with reward vector $\slack^\M(v_{\pi}^\M)$ instead of $r$.
\end{definition}
We first show that, for any policy $\pi^\prime$, the value vector of $\pi^\prime$ in $\M^\pi$ is merely the value vector of $\pi^\prime$ in $\M$ shifted by $v_{\pi}$. 

\begin{lemma}
    For any policies $\pi$ and $\pi'$ of $\gamma$-DMDP $\M = (\S, \actionset, p, r)$, $v^{\M}_{\pi^\prime} = v^{\M^\pi}_{\pi^\prime}+v^{\M}_{\pi}$, and therefore $v^\M_* = v^{\M^\pi}_* + v^\M_{\pi}$.
  \label[lemma]{lemma:value-shift}
\end{lemma}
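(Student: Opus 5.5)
The plan is to show that $v^{\M}_{\pi'} - v^{\M}_{\pi}$ is a fixed point of the on-policy operator $\T^{\M^\pi}_{\pi'}$. By uniqueness of that fixed point (the contraction property in \eqref{eq:gen_contract}, applied to $\M^\pi$ restricted to the actions of $\pi'$), this forces $v^{\M}_{\pi'} - v^{\M}_{\pi} = v^{\M^\pi}_{\pi'}$.

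\textbf{Fixed-point calculation.} Write $w \defeq v^\M_{\pi'} - v^\M_\pi$. Since $\M^\pi$ has rewards $\slack^\M(v^\M_\pi)$, for each $s\in\S$ we have
\[
[\T^{\M^\pi}_{\pi'}(w)]_s = r_{s,\pi'(s)} + \gamma\langle p(s,\pi'(s)), v^\M_\pi\rangle - [v^\M_\pi]_s + \gamma\langle p(s,\pi'(s)), v^\M_{\pi'} - v^\M_\pi\rangle .
\]
The two inner products involving $v^\M_\pi$ cancel, which leaves
\[
[\T^{\M^\pi}_{\pi'}(w)]_s = [\T^\M_{\pi'}(v^\M_{\pi'})]_s - [v^\M_\pi]_s = [v^\M_{\pi'}]_s - [v^\M_\pi]_s = w_s,
\]
where the middle equality uses that $v^\M_{\pi'}$ is the fixed point of $\T^\M_{\pi'}$. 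So $w$ is a fixed point of $\T^{\M^\pi}_{\pi'}$. An alternative route is to note that $v_{\pi'} = (\mathbf{I}-\gamma\P_{\pi'})^{-1}r_{\pi'}$, as in \Cref{fact:compute-value}, and that the shifted reward vector equals $r_{\pi'} - (\mathbf{I}-\gamma\P_{\pi'})v_\pi$.

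\textbf{Optimal values.} Because $v^{\M^\pi}_{\pi'} = v^\M_{\pi'} - v^\M_\pi$ holds for every policy $\pi'$, and the shift by $v^\M_\pi$ does not depend on $\pi'$, the coordinatewise maximum over policies commutes with the shift. Concretely, for each $s$,
\[
[v^\M_*]_s = \max_{\pi'}[v^\M_{\pi'}]_s = \max_{\pi'}[v^{\M^\pi}_{\pi'}]_s + [v^\M_\pi]_s = [v^{\M^\pi}_*]_s + [v^\M_\pi]_s .
\]
Here $\M^\pi$ has the same action sets as $\M$, so both maxima range over the same set of policies.

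There is no real obstacle in this argument. The only care needed is to invoke uniqueness of the fixed point for the on-policy operator in $\M^\pi$, which is justified by \eqref{eq:gen_contract}.
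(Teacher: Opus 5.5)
Your proof is correct and is essentially the paper's argument. The paper checks that $v^{\M^\pi}_{\pi'} + v^\M_\pi$ is a fixed point of $\T^\M_{\pi'}$ and uses uniqueness in $\M$; you check the mirror-image statement that $v^\M_{\pi'} - v^\M_\pi$ is a fixed point of $\T^{\M^\pi}_{\pi'}$ and use uniqueness in $\M^\pi$, which is the same cancellation read in the other direction, and your explicit derivation of $v^\M_* = v^{\M^\pi}_* + v^\M_\pi$ via the coordinatewise maximum over the shared policy set spells out a step the paper leaves implicit.
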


\ifdefined\ARXIV
\begin{proof}
  Since $v^{\M^\pi}_{\pi^\prime}$ is the unique fixed point of the Bellman on-policy operator $\T^{\M^\pi}_{\pi^\prime}$, it satisfies
  \begin{align*}
    [v^{\M^\pi}_{\pi^\prime}]_s
    &= \slack^\M(v_{\pi})_{s,\pi^\prime(s)} + \gamma \cdot \langle p(s, \pi^\prime(s)), v^{\M^\pi}_{\pi^\prime} \rangle\\
    &= \left(r_{s,\pi^\prime(s)} - [v^\M_{\pi}]_s + \gamma \cdot p(s, \pi^\prime(s)), v^\M_{\pi} \rangle\right) + \gamma \cdot \langle p(s, \pi^\prime(s)), v^{\M^\pi}_{\pi^\prime} \rangle,
  \end{align*}
  which by rearranging becomes
  \begin{align*}
    [v^{\M^\pi}_{\pi^\prime}]_s + [v^{\M}_{\pi}]_s
    &= r_{s,\pi^\prime(s)} + \gamma \cdot \langle p(s,\pi^\prime(s)),  v^{\M^\pi}_{\pi^\prime} + v^{\M}_{\pi} \rangle,
  \end{align*}
  and therefore $v^{\M}_{\pi^\prime} = v^{\M^\pi}_{\pi^\prime}+v^\M_{\pi}$.
\end{proof}
\fi

This shows that the $\M^\pi$ has the same optimal policy, and thus we can focus on identifying suboptimal actions in $\M^\pi$.
Observe that $\slack^\M(v_{\pi})_{s, \pi(s)} = 0$ for all states $s\in \S$.
That is, each state has at least one action with reward $0$.
Recall the definition of $\slack_{\max}^{\M}(v_{\pi})$ in \eqref{def:Delta_max} and note that $\slack_{\max}^{\M}(v_{\pi}) \ge 0$. 
We first prove the following bounds.
\begin{claim}
   For every policy $\pi$ of $\gamma$-DMDP $\M = (\S, \actionset, p, r)$, 
   $v_{\pi}^{\M^\pi} = \vec{0}$ and $\|v_{*}^{\M^\pi}\|_{\infty} \geq \slack_{\max}^{\M}(v_{\pi})$. 
  \label[claim]{claim:bound}
\end{claim}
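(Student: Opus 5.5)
The first claim follows from \cref{lemma:value-shift} with $\pi' = \pi$: it gives $v^\M_\pi = v^{\M^\pi}_\pi + v^\M_\pi$, so $v^{\M^\pi}_\pi = \vec 0$. Equivalently, the reward of $\M^\pi$ on every action of $\pi$ is $\slack^\M(v_\pi)_{s,\pi(s)} = [\T^\M_\pi(v_\pi)]_s - [v_\pi]_s = 0$. Hence $\vec 0$ is a fixed point of $\T^{\M^\pi}_\pi$, and by uniqueness of that fixed point (\cref{lemma:value-bound} applied to the restricted DMDP) it equals $v^{\M^\pi}_\pi$.

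For the lower bound on $\|v_*^{\M^\pi}\|_\infty$, I would apply the Bellman optimality operator of $\M^\pi$ to the vector $\vec 0$. Since $v^{\M^\pi}_\pi = \vec 0$ and optimal values dominate every policy's values, we have $v^{\M^\pi}_* \ge \vec 0$. Monotonicity of $\T^{\M^\pi}_*$ (its transition weights are nonnegative) and the fixed-point property from \cref{lemma:value-bound} then give
\[
v^{\M^\pi}_* = \T^{\M^\pi}_*(v^{\M^\pi}_*) \ge \T^{\M^\pi}_*(\vec 0).
\]
The $s$-th coordinate of $\T^{\M^\pi}_*(\vec 0)$ is $\max_{a \in \A_s} \slack^\M(v_\pi)_{s,a}$. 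Choosing $(s,a)$ to attain $\slack^\M_{\max}(v_\pi)$ yields $[v^{\M^\pi}_*]_s \ge \slack^\M_{\max}(v_\pi) \ge 0$, and therefore $\|v^{\M^\pi}_*\|_\infty \ge \slack^\M_{\max}(v_\pi)$.

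The only step that needs care is justifying monotonicity of $\T_*$, which is not stated explicitly earlier. It is immediate from the definition, since each term $r_{s,a} + \gamma\langle p(s,a), v\rangle$ is nondecreasing in $v$ because $p(s,a) \ge 0$. One can also avoid monotonicity altogether. Take a policy $\pi'$ that agrees with $\pi$ except at the maximizing state $s$, where it uses the maximizing action $a$. Its fixed-point equation in $\M^\pi$ has reward $\slack^\M_{\max}(v_\pi) \ge 0$ at $s$ and $0$ elsewhere, so $v^{\M^\pi}_{\pi'} \ge 0$ entrywise, because it is an expected discounted sum of nonnegative rewards. It follows that $[v^{\M^\pi}_{\pi'}]_s \ge \slack^\M_{\max}(v_\pi)$, since the immediate reward at $s$ already contributes that amount. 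This bound then passes to $v^{\M^\pi}_* \ge v^{\M^\pi}_{\pi'}$. I would present this second version, as it uses only the definition \eqref{eq:val_vec}.
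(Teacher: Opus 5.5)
Your proof is correct and matches the paper's: the paper also gets $v^{\M^\pi}_\pi = \vec 0$ from \cref{lemma:value-shift}, and the version you say you would present (a policy $\pi'$ that deviates from $\pi$ only at the maximizing state, so all its rewards in $\M^\pi$ are nonnegative) is exactly the paper's argument. Your first route, via $v^{\M^\pi}_* = \T^{\M^\pi}_*(v^{\M^\pi}_*) \ge \T^{\M^\pi}_*(\vec 0)$ using monotonicity of the Bellman operator, is also valid.
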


\ifdefined\ARXIV
\begin{proof}
  That $\|v_{\pi}^{\M^\pi}\|_{\infty} = 0$ follows by \Cref{lemma:value-shift} since each $(s, \pi(s))$ satisfies $\slack^\M(v_{\pi})_{s,\pi(s)} = 0$. 
  For the bound on $\|\vstar^{\M^\pi}\|_{\infty}$, consider the action $(s_*,a_*) \in \actionset$ that maximizes $\slack^\M(v_{\pi})_{s_*,a_*}$ which by definition has $\slack^\M(v_{\pi})_{s_*,a_*} = \slack_{\max}^{\M}(v_{\pi})$.
  Let $\pi^\prime$ be the policy that is the same as $\pi$ except at state $s_*$ where $\pi^\prime(s_*) = a_*$. 
  Note that since for every $s$, the reward of $(s, \pi(s))$ is $0$ in $\M^{\pi}$, it
  follows easily that $[v_{\pi^\prime}^{\M^\pi}]_{s_*} \geq \slack_{\max}^{\M}(v_{\pi})$, as the rewards from following $\pi^\prime$ are all non-negative. 
\end{proof}
\fi

These bounds imply that actions with sufficiently negative rewards do not belong to any optimal policy and can thus be ignored.
This allows us to bound the range of the rewards that we need to consider for the approximate value optimization.

\begin{claim}
  For any policy $\pi$ of $\gamma$-DMDP $\M = (\S, \actionset, p, r)$,
  no optimal policy $\pi_*$ of $\M$  has $\pi_*(s) = a$ for $\slack^\M(v_{\pi})_{s,a} < -\slack_{\max}^{\M}(v_{\pi}) \cdot (1+\gamma)$.
  
  \label[claim]{claim:max-min-vals}
\end{claim}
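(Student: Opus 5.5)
We work in the shifted $\gamma$-DMDP $\M^\pi$ and then transfer the conclusion back to $\M$. By \Cref{lemma:value-shift}, $v^\M_{\pi'} = v^{\M^\pi}_{\pi'} + v^\M_\pi$ for every policy $\pi'$, so $\M$ and $\M^\pi$ have exactly the same optimal policies. Moreover, since $\M^\pi$ has the same transitions as $\M$ and rewards $r'=\slack^\M(v_\pi)$, a direct computation gives $\slack^{\M^\pi}(u)_{s,a} = \slack^\M(v_\pi)_{s,a} + \gamma\langle p(s,a),u\rangle - u_s$. In particular, $\slack^{\M^\pi}(\vec 0)_{s,a} = \slack^\M(v_\pi)_{s,a}$. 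It therefore suffices to show that in $\M^\pi$ every action with $\slack^{\M^\pi}(\vec 0)_{s,a} < -\slack_{\max}^\M(v_\pi)(1+\gamma)$ has strictly negative advantage at $v_*^{\M^\pi}$. Any optimal policy has zero advantage at $v_*$ on all of its actions (\Cref{sec:prelim}), so such actions cannot appear in one.

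To do this, we apply \Cref{lemma:slack-diff} in $\M^\pi$ with $u=v_*^{\M^\pi}$ and $v=\vec 0$:
\[
\slack^{\M^\pi}(v_*^{\M^\pi})_{s,a} \le \slack^\M(v_\pi)_{s,a} + (1+\gamma)\|v_*^{\M^\pi}\|_\infty .
\]
This reduces the claim to proving the upper bound $\|v_*^{\M^\pi}\|_\infty \le \slack_{\max}^\M(v_\pi)$. With that bound in hand, the right-hand side is strictly negative whenever $\slack^\M(v_\pi)_{s,a} < -\slack_{\max}^\M(v_\pi)(1+\gamma)$, which is exactly the hypothesis.

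The main step is therefore this upper bound on $\|v_*^{\M^\pi}\|_\infty$. \Cref{claim:bound} supplies only the matching lower bound, so the upper bound must be proved separately. For the upper side, every reward in $\M^\pi$ is at most $\slack_{\max}^\M(v_\pi)$. This gives $v_*^{\M^\pi} \le \slack_{\max}^\M(v_\pi)/(1-\gamma)$, which is too weak by a factor $1/(1-\gamma)$. For the lower side, $v_\pi^{\M^\pi}=\vec 0$ gives $v_*^{\M^\pi}\ge \vec 0$.

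The crude bound is lossy, so I will avoid bounding values through rewards altogether and work directly with advantages at the point $\vec 0$. In $\M^\pi$ the point $\vec 0$ satisfies $\T_*^{\M^\pi}(\vec 0)_s = \max_a \slack^\M(v_\pi)_{s,a} \in [0,\slack_{\max}^\M(v_\pi)]$. I will then use the exact characterization of optimal actions: $(s,a)$ lies in no optimal policy iff $\slack(v_*)_{s,a}<0$. Combined with monotonicity of $\T_*$, this gives
\[
\slack^{\M^\pi}(v_*^{\M^\pi})_{s,a} = \slack^\M(v_\pi)_{s,a} + \gamma\langle p(s,a),v_*^{\M^\pi}\rangle - [v_*^{\M^\pi}]_s .
\]
Since $v_*^{\M^\pi}\ge \vec 0$, the term $-[v_*^{\M^\pi}]_s$ is at most $0$. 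So it suffices to bound $\gamma\langle p(s,a),v_*^{\M^\pi}\rangle \le \gamma\|v_*^{\M^\pi}\|_\infty$.

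To make this fit the claimed constant, I will try to bound $\gamma\|v_*^{\M^\pi}\|_\infty$ by $(1+\gamma)\slack_{\max}^\M(v_\pi)$. If only the crude estimate $\gamma\,\slack_{\max}^\M(v_\pi)/(1-\gamma)$ is achievable, I expect the intended argument to use the one-sided version of \Cref{lemma:slack-diff}: this drops the $-u_s$ term because $v_*^{\M^\pi}\ge\vec 0$. I will then revisit exactly which norm bound is available. This constant-matching step is where I expect the real difficulty to lie.
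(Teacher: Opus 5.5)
\textbf{There is a genuine gap, and it cannot be closed: the claim as stated is false once $\gamma > (\sqrt{5}-1)/2$.}

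Your setup is sound. Passing to $\M^\pi$, noting $\slack^{\M^\pi}(\vec 0) = \slack^{\M}(v_\pi)$, and applying \Cref{lemma:slack-diff} correctly show that the claim would follow from $\|\vstar^{\M^\pi}\|_\infty \le \slack_{\max}^{\M}(v_\pi)$. In your one-sided variant it would follow from $\gamma\|\vstar^{\M^\pi}\|_\infty \le (1+\gamma)\slack_{\max}^{\M}(v_\pi)$. You leave that bound unproved, and it is false in general: the crude bound $\vec 0 \le \vstar^{\M^\pi} \le \slack_{\max}^{\M}(v_\pi)/(1-\gamma)$ can be attained.

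Here is a counterexample. Take $\gamma = 0.9$ and two states $A,B$.
At $A$ there are two self-loops: $a_0$ with reward $0$ and $a_1$ with reward $1$.
At $B$ there is a self-loop $b_0$ with reward $0$, and an action $b_1$ with reward $-5$ that moves to $A$.
Let $\pi$ choose $a_0$ and $b_0$. Then $v_\pi = \vec 0$, so advantages equal rewards, $\slack_{\max}^{\M}(v_\pi) = 1$, and $\slack^{\M}(v_\pi)_{B,b_1} = -5 < -1.9$. Yet $\vstar = (10,4)$, and $b_1$ is the unique optimal action at $B$, since $-5 + 0.9\cdot 10 = 4 > 0$. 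Here $\|\vstar^{\M^\pi}\|_\infty = 10 = \slack_{\max}^{\M}(v_\pi)/(1-\gamma)$.

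The same construction with general $\gamma$ and reward $-c$ on $b_1$ shows that an action with advantage $-c$ at $v_\pi$ can be optimal for every $c \le \frac{\gamma}{1-\gamma}\slack_{\max}^{\M}(v_\pi)$. This exceeds $(1+\gamma)\slack_{\max}^{\M}(v_\pi)$ exactly when $\gamma^2+\gamma>1$.

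The paper's proof has the same hole, with the inequality pointing the wrong way. It obtains $\|v_\pi - \vstar\|_\infty \ge \slack_{\max}^{\M}(v_\pi)$ from \Cref{claim:bound}. It then invokes \Cref{lemma:discard} with $v = v_\pi$ and $\eps = \slack_{\max}^{\M}(v_\pi)$, which requires $v_\pi$ to be $\eps$-optimal, i.e.\ the reverse inequality. Your remark that \Cref{claim:bound} supplies only the lower bound pinpoints exactly this error.

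What your one-sided argument does prove, using the crude bound, is a corrected statement. If $\slack^{\M}(v_\pi)_{s,a} < -\frac{\gamma}{1-\gamma}\slack_{\max}^{\M}(v_\pi)$, then $\slack^{\M^\pi}(\vstar^{\M^\pi})_{s,a} \le \slack^{\M}(v_\pi)_{s,a} + \gamma\|\vstar^{\M^\pi}\|_\infty - [\vstar^{\M^\pi}]_s \le \slack^{\M}(v_\pi)_{s,a} + \frac{\gamma}{1-\gamma}\slack_{\max}^{\M}(v_\pi) < 0$. So $(s,a)$ lies in no optimal policy. This implies the stated claim only for $\gamma \le (\sqrt{5}-1)/2$.

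The error propagates. In the example above, \texttt{MultValApprox} ignores $b_1$ and returns $v \approx (10,0)$ although $\vstar = (10,4)$, contradicting \Cref{lemma:solve-to-scale}. Defining $X^\pi$ with the corrected threshold keeps the reward range of $\M^\pi \setminus X^\pi$ within $\slack_{\max}^{\M}(v_\pi)/(1-\gamma)$. The approximate solver must then be called with $\delta_\gamma = \Theta((1-\gamma)^2)$ instead of $\Theta(1-\gamma)$.
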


\ifdefined\ARXIV
\begin{proof}
  
  By \cref{claim:bound,lemma:value-shift}, we have $\|v_{\pi} - v_*\| \geq \slack_{\max}(v_{\pi})$.
  Thus, \cref{lemma:discard} asserts that $(s, a)$ with $\slack(v_{\pi})_{s,a} < -\slack_{\max}^{\M}(v_{\pi}) \cdot (1+\gamma)$ is not part of any optimal policy. 
\end{proof}
\fi

Based on \cref{claim:max-min-vals}, we let
$X^{\pi} \defeq \{(s, a): \slack^\M(v_{\pi})_{s,a} < -\slack_{\max}^{\M}(v_{\pi}) \cdot (1+\gamma)\}$ be the set of actions that have very negative advantages in $\M$ and can be
ignored.
We now prove \cref{lemma:solve-to-scale}.

\begin{proof}[Proof of \cref{lemma:solve-to-scale}]
  We first bound the runtime of \texttt{MultValApprox}. 
  \texttt{MultValApprox} first computes $X^{\pi}$ which can be done in $O(\Stot^\omega + \Stot\Atot)$ time by calculating $v_{\pi}$ and  $\slack^\M(v_{\pi})_{s,a}$ for all $(s, a) \in \A$ (which is also used in 
  computing the return value of $\eps \defeq \slack_{\max}^{\M}(v_{\pi}) \cdot \frac{1-\gamma}{3(1+\gamma)}$).
  Since $\frac{1-\gamma}{3(1+\gamma)^2} = \Theta(1-\gamma)$, 
  the call to a $\delta_\gamma$-approximate solver can be implemented in 
  $\T_{\Det}(\Stot,\Atot, \Theta(1-\gamma))$ deterministic time or $\T_{\Rnd}(\Stot,\Atot, \Theta(1-\gamma))$ randomized expected time, which establishes the running time of \texttt{MultValApprox}.

We now prove the correctness of \texttt{MultValApprox}.
Let $\M^\prime \defeq \M^\pi \setminus X^{\pi}$, i.e., the $\gamma$-DMDP that only contains actions in $\actionset\setminus X^{\pi}$ and has the reward vector $\slack^\M(v_{\pi})$. 
Note that since $\delta_{\gamma} \defeq \frac{1-\gamma}{3(1+\gamma)^2} = \frac{\eps}{\slack_{\max}^{\M}(v_{\pi})\cdot(1+\gamma)}$ and 
the maximum magnitude of the reward in $\M^\prime$ is bounded by $\slack_{\max}^{\M}(v_{\pi})\cdot(1+\gamma)$ by definition of $X^{\pi}$, 
the value vector $v^\prime$ it outputs satisfies $\|v^\prime - \vstar^{\M^\prime}\|_{\infty} \leq \eps$ (see \cref{def:apx-algo} for the definition of an $\delta$-approximate algorithm). Thus, 
it follows from \cref{lemma:value-shift,claim:max-min-vals} that $v$ is $\eps$-optimal for the original $\gamma$-DMDP $\M$. 
Lastly, 
note that by \cref{claim:bound,lemma:value-shift} we know that $\|\vstar^\M - v_{\pi}^\M\|_{\infty} \geq \slack_{\max}^{\M}(v_{\pi})$, which implies $\eps \leq \|v_{\pi}^\M - \vstar^\M\|_{\infty} \cdot \frac{1-\gamma}{3(1+\gamma)}$, so $\eps$ satisfies our requirement. 
This concludes the proof.
\end{proof}

\subsection{Putting Everything Together} \label{sec:combine}

Here we use \cref{lemma:discard,lemma:solve-to-scale} to prove the correctness and efficiency of \cref{alg:reduction}, our reduction from exact $\gamma$-DMDP solvers to approximate $\gamma$-DMDP solvers. This is captured by \Cref{thm:framework_correctness} below. 
We then use \Cref{thm:framework_correctness} to prove \Cref{thm:det-reduction}.

\begin{theorem}[Correctness and Efficiency of \cref{alg:reduction}]
\label{thm:framework_correctness}
    Given a $\gamma$-DMDP $\M = (\S, \actionset, p, r)$, 
    \Cref{alg:reduction} computes an optimal policy $\pi$ of $\M$ and
    each iteration of the for loop can be implemented 
    in $O(\Stot^{\omega} + \Stot\Atot) + \T_{\Det}(\Stot, \Atot, \Theta(1-\gamma))$ deterministic time and $O(\Stot^{\omega} + \Stot\Atot) + \T_{\Rnd}(\Stot, \Atot, \Theta(1-\gamma))$ randomized expected time. 
\end{theorem}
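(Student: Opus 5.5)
We prove correctness by an invariant and termination argument, and running time by a per-iteration accounting. The invariant is that at the start of each iteration of the for loop, every optimal policy of the original $\gamma$-DMDP is still a policy of the current $\M$. Consequently the optimal values of the current $\M$ coincide with those of the original.

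The invariant is preserved for the following reason. By \cref{lemma:solve-to-scale}, \texttt{MultValApprox} returns $\eps$-optimal values $v$ for the current $\M$. The first part of \cref{lemma:discard} then says every $(s,a)\in D$ lies in no optimal policy of the current $\M$. Since the current optimal values equal the original ones, and optimality of a policy is characterized by zero advantage at $\vstar$, the set of optimal policies of the current $\M$ is exactly the set of original optimal policies that survive. Removing $D$ therefore removes no original optimal policy and leaves $\vstar$ unchanged. Each $\A_s$ also stays nonempty, because some optimal action at $s$ is never discarded.

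Termination holds because each iteration with $D\neq\emptyset$ strictly decreases the finite number of actions, so the loop returns after at most $\Atot$ iterations.

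For the returned policy, suppose the loop returns at an iteration with $D=\emptyset$ and, for contradiction, that $\pi$ is not optimal in the current $\M$. Then $\|v_\pi-\vstar\|_\infty>0$. By \cref{lemma:solve-to-scale}, $\eps\le \|v_\pi-\vstar\|_\infty\cdot\frac{1-\gamma}{3(1+\gamma)}<\|v_\pi-\vstar\|_\infty\cdot\frac{1-\gamma}{2(1+\gamma)}$. The second part of \cref{lemma:discard} then applies. Pick a state $s$ attaining the minimum in $\slack_{*,\min}(\pi)$; the pair $(s,\pi(s))$ satisfies $\slack(\vstar)_{s,\pi(s)}\le \slack_{*,\min}(\pi)$ and hence $\slack(v)_{s,\pi(s)}<-\eps(1+\gamma)$. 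So $(s,\pi(s))\in D$, contradicting $D=\emptyset$.

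One degenerate case needs care: if $\slack_{\max}(v_\pi)=0$ then $\eps=0$. But then \cref{claim:bound} gives $\|\vstar-v_\pi\|_\infty\ge 0$ only, so we instead use \cref{lemma:large-slack-opt}. Nonoptimality gives some $(s,\pi(s))$ with $\slack(\vstar)_{s,\pi(s)}<0$, and with $\eps=0$ the values $v$ equal $\vstar$ exactly, so $\slack(v)_{s,\pi(s)}<0=-\eps(1+\gamma)$ and $(s,\pi(s))$ is again discarded. Hence $\pi$ is optimal for the current $\M$. By the invariant, the current $\M$ has the same $\vstar$, so $v_\pi=\vstar$ and $\pi$ is optimal for the original $\M$.

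For the running time, selecting $\pi$ takes $O(\Atot)$ time. The call to \texttt{MultValApprox} costs $O(\Stot^\omega+\Stot\Atot)+\T_{\Det}(\Stot,\Atot,\Theta(1-\gamma))$ deterministically, or $\T_{\Rnd}$ in expectation, by \cref{lemma:solve-to-scale}; this applies because the current $\M$ has at most $\Atot$ actions and the running times are monotone. Computing $\slack(v)$ for all actions takes $O(\Stot\Atot)$ time, since each inner product $\langle p(s,a),v\rangle$ costs $O(\Stot)$. Forming $D$ and removing it also takes $O(\Stot\Atot)$ time. Summing gives the stated per-iteration bound.

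The main subtlety is the invariant: one must argue that $\vstar$, and the optimality criterion used by \cref{lemma:discard}, refer to the current rather than the original MDP, and check that these agree after discarding.
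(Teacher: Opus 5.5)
Your proof is correct and follows the paper's argument. You use \cref{lemma:solve-to-scale} with the first part of \cref{lemma:discard} to show $D$ contains only suboptimal actions, use the strict bound $\eps<\|v_\pi-\vstar\|_\infty\frac{1-\gamma}{2(1+\gamma)}$ with the second part to force the worst action of a nonoptimal $\pi$ into $D$, and do the same per-iteration cost accounting. Your explicit invariant that discarding preserves $\vstar$ fills in a step the paper leaves implicit. Your separate $\eps=0$ case is unnecessary: the main argument already covers it, and $\slack_{\max}(v_\pi)=0$ in fact forces $\pi$ to be optimal.
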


\begin{proof}[Proof of \Cref{thm:framework_correctness}]
    First, we prove the correctness of \cref{alg:reduction}. 
    For this, first note that by \cref{lemma:discard}, every $(s, a) \in D$, for $D$ defined in \Cref{line:discard_actions}, satisfies that $\pi(s) \ne a$ for all optimal policies $\pi$ of $\M$ since $v$ is $\eps$-optimal by \cref{lemma:solve-to-scale}.
    Next, we show that if $\slack_{*, \min}^{\M}(\pi) < 0$ in some iteration of the for loop in \Cref{line:main_for},  $D \supseteq D^\M(\pi)$, for $D^\M(\pi) \defeq \{(s,a)\in \actionset: \slack^{\M}(v_*)_{s, a}\leq \slack_{*, \min}^{\M}(\pi)\}$. 
    For this, note that if $\slack_{*, \min}^{\M}(\pi) < 0$, then $\|v_{\pi} - \vstar\|_{\infty} > 0$, and thus $\eps < \|v_{\pi} - \vstar\|_{\infty} \cdot \frac{1-\gamma}{2(1+\gamma)}$.
  By \cref{lemma:discard}, any $(s,a)$ with $\slack^{\M}(v_*)_{s, a} \leq \slack_{*, \min}^{\M}(\pi)$ must have $\slack^{\M}(v)_{s, a} < -\eps  (1+\gamma)$.
  Therefore, in this case we must have $D\supseteq D^\M(\pi) = \{(s,a)\in \actionset: \slack^{\M}(v_*)_{s, a}\leq \slack_{*, \min}^{\M}(\pi)\}$, and so $|D| \ge 1$. 
    Consequently, if $D = \emptyset$ in some iteration of the for loop in \Cref{line:main_for}, we have $\slack_{*, \min}(\pi) = 0$, since otherwise the $(s,\pi(s))$ with $\slack(\vstar)_{s,\pi(s)} = \slack_{*, \min}(\pi)$ will be in $D$. 
  This shows that whenever \cref{alg:reduction} terminates, the policy $\pi$ must have $\slack_{*, \min}(\pi) = 0$, i.e., is optimal.

  Now, we bound the running time of \cref{alg:reduction}. To do so, we bound the running time of each iteration of the for loop in \Cref{line:main_for}. 
  Note that selecting a policy $\pi$ takes $O(\Atot)$ time.
  $O(\Stot\Atot)$ time, respectively. 
  Computing approximate values, by \cref{lemma:solve-to-scale}, takes $O(\Stot^\omega + \Stot\Atot) + \T_{\Det}(\Stot, \Atot,\Theta(1-\gamma))$ time or $O(\Stot^\omega + \Stot\Atot) + \T_{\Rnd}(\Stot, \Atot,\Theta(1-\gamma))$ randomized expected time. 
  Computing $D$ in \Cref{line:discard_actions} takes $O(\Stot \Atot)$ time by computing $\slack(v)_{s,a}$ for all $(s, a) \in \A$. 
  Thus, overall, the (expected) running time per iteration is bounded by $O(\Stot^\omega + \Stot\Atot) + \T_{\Det}(\Stot, \Atot,\Theta(1-\gamma))$ or $O(\Stot^\omega + \Stot\Atot) + \T_{\Rnd}(\Stot, \Atot,\Theta(1-\gamma))$. 
\end{proof}

\begin{remark}
\label{rem:strongly_poly}
    It is easy to see that our framework itself runs in strongly polynomial time even in the stricter Turing model (i.e., the space complexity of our framework is polynomial of the input bit complexity).
    In particular, our framework first evaluates $v_{\pi}$, which can be implemented in $O(\Stot^\omega)$ strongly polynomial time in the Turing model (see, e.g.,~\cite{Storjohann05}). It then only performs a constant number of basic operations (e.g., addition, subtraction, multiplication) involving $v_{\pi}$ and the inputs.
    This includes multiplication with $1/(1-\gamma)$ which has the same bit complexity as the input value $\gamma$.
    
    However, since the framework invokes a black-box approximate solver (e.g., the recent linear programming algorithms), we do not formally claim our end-to-end exact algorithms run in strongly polynomial time in the Turing model, as bounding the bit complexity of the operations performed by these approximate solvers is beyond the scope of this paper.
    That being said, we remark that there has been some recent work~\cite{GPV23} that does this for a class of continuous optimization methods and some linear programming algorithms. 
\end{remark}

We now prove our theorem regarding the deterministic reduction, which we restate below. 

\DetReduction*

\begin{proof}[Proof of \Cref{thm:det-reduction}]
  We prove that \Cref{alg:reduction} respects the desired properties. 
  Note that its correctness and running time follow from \Cref{thm:framework_correctness} provided that the number of iterations $T$ of
  \cref{alg:reduction} satisfies $T = O(\Atot)$. However, $T = O(\Atot)$ as in every iteration of the for loop in which the algorithm does not terminate, the set $D$ contains at least one action, and therefore the number of actions in $\M$ decreases by at least $1$.
\end{proof}

\section{Randomized Reduction}\label{sec:rand_reduction}

In this section, we 
instantiate our framework \Cref{alg:reduction} to obtain our randomized reduction, with correctness and efficiency guarantees stated in \Cref{thm:rand-reduction}. 
In particular, instead of choosing an arbitrary policy $\pi$ in \Cref{line:policy} of \cref{alg:reduction}, we select a policy uniformly at random from the remaining policies (i.e., policies which do not choose actions already discarded), i.e., each remaining policy has an equal probability of being selected.  
This approach is motivated by the fact that the worst-case running time for the algorithm we give for \Cref{thm:det-reduction} 
is achieved when only one action is discarded at a time, which intuitively corresponds to picking the ``most suboptimal'' policy $\pi$ in each iteration. We show that picking a random policy instead decreases the number of iterations of \cref{alg:reduction} in expectation.

To analyze the iteration count when randomly selecting $\pi$ in \cref{line:policy}, we consider the number of $(s, a) \in \actionset$ with $\slack^{\M}(\vstar^\M)_{s, a} \leq \slack_{*, \min}^{\M}(\pi) = \min_{s'} \slack^\M(\vstar^\M)_{s', \pi(s')}$ (i.e., all actions with more negative advantages than ``worst'' action chosen by $\pi$) for policy $\pi$ picked uniformly at random. 
It is easy to see that obtaining $\epsilon$-optimal values $v$ for the value of $\epsilon$ chosen in \Cref{alg:reduction} allows us, by \Cref{lemma:discard}, to discard all aforementioned actions.
To reason about how many such actions we can discard, we let $\Pi(\M)$ be the set of all policies of $\M$, define a potential $\Phi(\M) \defeq |\Pi(\M)| = \prod_{s}|\A_s|$ to be the number of policies, and let $\pi \sim \Pi(\M)$ denote sampling $\pi$ uniformly at random from $\Pi(\M)$.
In \cref{lemma:potential}, we show that, in expectation for $\pi \sim \Pi(\M)$,
$\Phi(\M)$ drops by a factor of two after discarding actions more suboptimal than $\slack^\M_{*,\min}(\pi)$. Consequently, this allows us to reduce the number of calls to our procedure from $\Atot - \Stot$ to $O(\Stot \log \Atot)$ in expectation. 

\begin{restatable}{lemma}{Potential}
  Let $\M = (\S, \actionset, p, r)$ be a $\gamma$-DMDP where $\Phi(\M) > 0$.
  Then, letting $D^\M(\pi) \defeq \{(s, a): \slack^{\M}(\vstar^\M)_{s, a} \leq \slack_{*, \min}^{\M}(\pi)\}$ for policy $\pi$ we have $\mathbb{E}_{\pi \sim \policy(\M)}[\Phi(\M\setminus D^\M(\pi))] \leq \frac{1}{2} \Phi(\M)$.
  \label[lemma]{lemma:potential}
\end{restatable}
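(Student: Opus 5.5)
The plan is a symmetry argument with two independent random policies. First I break ties. Fix an arbitrary total order $\prec$ on $\actionset$ that refines the order by $\slack^\M(\vstar^\M)_{s,a}$: if $\slack^\M(\vstar^\M)_{s,a} < \slack^\M(\vstar^\M)_{s',a'}$ then $(s,a) \prec (s',a')$, and ties are broken arbitrarily but consistently. For a policy $\pi$, let $m(\pi)$ be the $\prec$-minimal pair among $\{(s,\pi(s)) : s \in \S\}$. Then $\slack^\M(\vstar^\M)_{m(\pi)} = \slack_{*, \min}^{\M}(\pi)$. Define $\tilde D(\pi) \defeq \{(s,a) \in \actionset : (s,a) \preceq m(\pi)\}$. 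Every pair in $\tilde D(\pi)$ has advantage at most $\slack_{*, \min}^{\M}(\pi)$, so $\tilde D(\pi) \subseteq D^\M(\pi)$. Removing more actions can only reduce the number of policies, so $\Phi(\M \setminus D^\M(\pi)) \le \Phi(\M \setminus \tilde D(\pi))$. It therefore suffices to prove the bound for $\tilde D$. (If some state loses all its actions, the potential is simply $0$, which is consistent with this.)

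Next I rewrite the potential as a probability. A policy $\pi'$ of $\M$ survives in $\M \setminus \tilde D(\pi)$ exactly when every pair $(s,\pi'(s))$ is strictly $\prec$-above $m(\pi)$. Equivalently, $m(\pi) \prec m(\pi')$. Hence
\[
\Phi(\M \setminus \tilde D(\pi)) = \bigl|\{\pi' \in \Pi(\M) : m(\pi) \prec m(\pi')\}\bigr| = \Phi(\M) \cdot \Pr_{\pi' \sim \Pi(\M)}\bigl[m(\pi) \prec m(\pi')\bigr].
\]
Taking the expectation over an independent $\pi \sim \Pi(\M)$ gives
\[
\mathbb{E}_{\pi}\bigl[\Phi(\M \setminus \tilde D(\pi))\bigr] = \Phi(\M) \cdot \Pr_{\pi,\pi'}\bigl[m(\pi) \prec m(\pi')\bigr],
\]
where $\pi, \pi'$ are i.i.d.\ uniform policies.

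Finally, I bound this probability by exchangeability. Since $\pi$ and $\pi'$ are i.i.d., $\Pr[m(\pi) \prec m(\pi')] = \Pr[m(\pi') \prec m(\pi)]$. These two events are disjoint because $\prec$ is a strict total order. Their probabilities therefore sum to at most $1$, so each is at most $\tfrac12$, which proves the lemma. The only delicate point is ties in the advantage values. A naive definition using "$\le$" on advantages would make the two events overlap and break the symmetry argument. The tie-breaking order $\prec$, together with the monotonicity reduction from $D^\M(\pi)$ to $\tilde D(\pi)$, is exactly what resolves this.
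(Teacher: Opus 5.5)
Your proof is correct, and it takes a genuinely different route from the paper's.

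The paper argues by induction on $\Phi(\M)$. It picks a pair $(s_*,a_*)$ minimizing $\slack^\M(\vstar^\M)$, sets $\M' = \M\setminus\{(s_*,a_*)\}$, and conditions on whether $\pi(s_*)=a_*$. In one case it applies the inductive hypothesis to $\M'$; in the other it uses the trivial bound $\Phi(\M')$. This yields the factor $\frac{|\A_{s_*}|^2-1}{2|\A_{s_*}|^2}\le\frac12$.

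Because the hypothesis is invoked on $\M'$ with its own set $D^{\M'}(\pi)$, that argument needs DMDP structure. First, every state has a zero-advantage action at $\vstar^\M$, so a negative minimum forces $|\A_{s_*}|\ge 2$ and a zero minimum means all pairs are optimal. Second, deleting a strictly suboptimal pair leaves $\vstar$, and hence all remaining advantages, unchanged.

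Your coupling of two i.i.d.\ uniform policies needs neither induction nor any property of $\vstar^\M$. It shows that for an arbitrary real-valued score on $\actionset$, discarding every pair scoring at most the worst score of a random policy removes at least half the policies in expectation. So your argument is shorter and more general. The paper's peeling argument instead mirrors the algorithm's action-by-action discarding.

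One simplification: your tie-breaking order $\prec$ and the passage to $\tilde D(\pi)$ are unnecessary. Since $D^\M(\pi)$ is defined with a non-strict inequality, $\pi'$ survives in $\M\setminus D^\M(\pi)$ exactly when $\slack^\M_{*,\min}(\pi') > \slack^\M_{*,\min}(\pi)$. Hence $\E_{\pi}[\Phi(\M\setminus D^\M(\pi))] = \Phi(\M)\cdot\Pr_{\pi,\pi'}[\slack^\M_{*,\min}(\pi') > \slack^\M_{*,\min}(\pi)]$ holds exactly. The two strict events are already disjoint, and ties fall into neither, so the overlap you were guarding against never arises. This also gives the value $\frac12\bigl(1-\Pr[\slack^\M_{*,\min}(\pi) = \slack^\M_{*,\min}(\pi')]\bigr)\Phi(\M)$ directly.
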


\begin{proof}[Proof of \cref{lemma:potential}]
  We proceed by an induction on $\Phi(\M)$.
  For the base case where $\Phi(\M) = 1$, there exists exactly one policy $\pi$ of $\M$.
  This means that all actions are optimal, i.e., have advantage zero, and thus $D^\M(\pi)$ contains all of them and thus $\Phi(\M \setminus D^\M(\pi)) = 0 \leq \frac{1}{2}$.

  When $\Phi(\M) > 1$, consider an action $(s_*, a_*)$ that minimizes $\slack^\M(\vstar^\M)_{s_*, a_*}$.
  If $\slack^\M(\vstar^\M)_{s_*, a_*} = 0$, then again all actions are optimal and therefore $\Phi(\M \setminus D^{\M}(\pi)) = 0 \leq \frac{1}{2}\Phi(\M)$.
  We thus assume for the remainder of the proof that $\slack^\M(\vstar^\M)_{s_*, a_*} < 0$.
  Let $\M^\prime \defeq \M \setminus \{(s_*,a_*)\}$ be the $\gamma$-DMDP obtained by removing $(s_*, a_*)$ from $\M$. 
  First observe that $(s_*,a_*)\in D^\M(\pi)$ regardless of the choice of $\pi$ and that $\Phi(\M^\prime) = \Phi(\M) \cdot \frac{|\A_{s_*}|-1}{|\A_{s_*}|}$ by definition of $\Phi$.
  Consider the process of selecting a uniformly random policy $\pi$ of $\M$.
  We consider the two cases of whether $\pi(s_*) = a_*$ or not and bound $\mathbb{E}[\Phi(\M\setminus D^\M(\pi))]$ by
  \begin{equation}
  \begin{split}
    \mathbb{E}[\Phi(\M\setminus D^\M(\pi))]
    &= \Pr[\pi(s_*) \neq a_*] \cdot \mathbb{E}[\Phi(\M\setminus D^\M(\pi)) \mid \pi(s_*) \neq a_*]
    \\ &+ \Pr[\pi(s_*) = a_*] \cdot \mathbb{E}[\Phi(\M\setminus D^\M(\pi)) \mid \pi(s_*) = a_*].
  \end{split}
  \label{eq:condition}
  \end{equation}
  To compute the first term, note that the event $\pi(s_*) \neq a_*$ happens with probability $1 - \frac{1}{|\A_{s_*}|}$.
  The distribution of $\pi \sim \Pi(\M)$ conditioned on $\pi(s_*) \neq a_*$ is the same as the distribution $\pi \sim \Pi(\M^\prime)$.
  Moreover, observe that we have $\vstar^{\M} = \vstar^{\M^\prime}$ since we only remove a suboptimal action, and therefore $\slack^\M(v_*^\M)_{s,a} = \slack^{\M^\prime}(\vstar^{\M^\prime})_{s,a}$ for all $(s,a) \neq (s_*, a_*)$.
  In particular this implies $D^{\M^\prime}(\pi) = D^{\M}(\pi) \setminus \{(s_*,a_*)\}$.
  Because $(s_*,a_*) \in D^\M(\pi)$ regardless of the choice of $\pi$, we have
  \[
    \mathbb{E}_{\pi \sim \policy(\M)}[\Phi(\M \setminus D^\M(\pi)) \mid \pi(s_*) \neq a_*] = 
    \mathbb{E}_{\pi \sim \policy(\M^\prime)}[\Phi(\M^\prime \setminus D^{\M^\prime}(\pi))] \leq \frac{1}{2}\Phi(\M^\prime),
  \]
  where the last inequality follows from the inductive hypothesis.
  For the second term in \eqref{eq:condition}, again since $D^{\M}(\pi)$ contains $(s_*,a_*)$ we have $\Phi(\M\setminus D^\M(\pi)) \leq \Phi(\M^\prime)$.
  Overall, we get
  \begin{align*}
    \mathbb{E}_{\pi \sim \policy(\M)}[\Phi(\M\setminus D^\M(\pi))] &\leq \frac{|\A_{s_*}| - 1}{|\A_{s_*}|} \cdot \frac{1}{2}\Phi(\M^\prime) + \frac{1}{|\A_{s_*}|} \cdot \Phi(\M^\prime)
    = \frac{|\A_{s_*}|^2-1}{2|\A_{s_*}|^2} \cdot \Phi(\M) \leq \frac{1}{2}\Phi(\M).
  \end{align*}
\end{proof}

We now use \Cref{lemma:potential} to prove \cref{thm:rand-reduction}, the theorem regarding our randomized reduction, which we restate below.

\RandReduction*

\begin{proof}[Proof of \Cref{thm:rand-reduction}]
  We prove that \Cref{alg:reduction} with uniformly random policy selection respects the desired properties. 
  Note that its correctness and running time follow from \Cref{thm:framework_correctness} provided that the number of iterations $T$ of
  \cref{alg:reduction} satisfies $\mathbb{E}[T] = \tilde{O}(\Stot)$.
  
    To this end, note that by \cref{lemma:discard} the set $D$ contains $D(\pi)$ as long as $\slack_{*, \min}^{\M}(\pi) \neq 0$. 
    Thus, by \cref{lemma:potential}, $\Phi(\M)$ decreases by a factor of $2$ each iteration in expectation, since clearly $\Phi$ is such that $\Phi(\M \setminus D) \leq \Phi(\M \setminus D(\pi))$.
    When $\Phi(\M) \leq 1$, $\M$ has only one policy which is guaranteed to be optimal, and thus we will get $\slack_{*, \min}^{\M}(\pi) = 0$ and $D = \emptyset$ at that point. The value of $\Phi(\M)$ is upper bounded by $\Atot^{\Stot}$, and thus the expected number of iterations $T$ of the for loop until we have an optimal policy is $O(\Stot\log \Atot)$. By \Cref{thm:framework_correctness}, each iteration takes $O(\Stot^{\omega} + \Stot\Atot) + \T_{\Rnd}(\Stot, \Atot, \Theta(1-\gamma))$ randomized expected time, and thus the stated expected runtime of our algorithm follows. 
\end{proof}

\section{A Randomized Policy Iteration Algorithm} \label{appendix:random-policy}

In this section, we 
instantiate our framework (\Cref{alg:reduction}) to obtain a simple randomized variant of policy iteration that alternates between a ``random'' step and $O(\log(1/\gamma)/(1-\gamma))$ policy iteration steps, outperforming the deterministic variant (\cref{alg:randomized-policy-iteration}). 
In particular, we start from a policy selected uniformly at random in \Cref{line:policy}, as in \Cref{sec:rand_reduction}, and,
to implement a $\delta_{\gamma}$-approximate $\gamma$-DMDP solver $\mathsf{ApxALG}$, apply 
the well-known policy iteration algorithm~\cite{howard1960dynamic} that, for each state $s$, simultaneously switches the action taken at $s$ to one that has the best advantage with respect to the current policy $\pi$.
In other words, in each iteration of policy iteration, we update $\pi$ to $\pi^+$ where $\slack^{\mdp}(v_{\pi})_{s, \pi^+(s)} = \max_{a} \slack^{\mdp}(v_{\pi})_{s, a}$ for all $s \in \S$. 

\begin{algorithm}[h]
\caption{A randomized policy iteration algorithm}
\label{alg:randomized-policy-iteration}
\SetKwProg{Fn}{Function}{:}{}
\SetKwFunction{apxval}{MultValApprox}
\SetKwFunction{bestpolicy}{PolicyIterationStep}
\SetKwFunction{randompolicy}{RandomizedPolicyIteration}
\SetKwRepeat{Do}{do}{while}
\SetEndCharOfAlgoLine{}

\Fn{\randompolicy{$\mdp$}} {
  \For{$t = 0, 1, \ldots$} {\label{line:for_loop_rand_PI}
    Select a uniformly random policy $\pi$ of $\M$.\;

    \For{$x = 0, \ldots, \lceil \frac{\ln (3(1+\gamma)/(1-\gamma)^2)}{1-\gamma}\rceil$ \label{line:for_rand_PI}} {
      $\pi \gets$ \bestpolicy{$\mdp, \pi$}.\;
    }
    $\M \gets \M \setminus D$  where $D = \{(s,a)\in \actionset: \slack^\M(v)_{s, a} < - \slack_{\max}^{\M}(v_{\pi}) \cdot \frac{1-\gamma}{3}\}$. \;

  \lIf(\tcp*[f]{For analysis define $T \defeq t$ on this line}){$D = \emptyset$}{\Return $\pi$}
  }
}

\;
\Fn{\bestpolicy{$\mdp, \pi$}}{
    \Return $\pi^+$, where $\pi^+(s) \defeq \argmax_{(s,a)\in \actionset}r_{s,a}+\gamma \cdot \langle p(s,a), v_{\pi} \rangle$.\;\label{line:new-policy}
}
\end{algorithm}

To prove the correctness of \cref{alg:randomized-policy-iteration}, we use the following fact that $\Theta(\log(1/\delta)/(1-\gamma))$  policy iteration steps times suffices to improve the optimality of a policy  $\pi$ by a $\delta < 1$ factor.

\begin{lemma}[{see, e.g.,~\cite[Lemma 2]{Scherrer13}}]
\label[lemma]{lem:one_step_policy_iter}
    For any $\gamma$-DMDP $\M=(\S,\actionset,p,r)$, policy $\pi$ of $\M$, and $\pi^+ \defeq \emph{\texttt{PolicyIterationStep}}(\M, \pi)$ (as in \cref{alg:randomized-policy-iteration}), $\|v^\M_{\pi^+} - v^\M_*\|_{\infty} \le \gamma \|v^\M_{\pi} - v^\M_*\|_{\infty}$.
\end{lemma}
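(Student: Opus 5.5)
We prove the standard policy-iteration contraction via monotonicity of the Bellman operators. Write $\pi^+$ for the greedy policy, so that $\T_{\pi^+}(v_\pi) = \T_*(v_\pi)$ by definition of \texttt{PolicyIterationStep}. The plan is to sandwich $v_{\pi^+}$ between $\T_*(v_\pi)$ and $\vstar$, and then use that $\T_*$ is a $\gamma$-contraction (\cref{lemma:value-bound}).

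\textbf{Step 1: monotonicity.} For any policy $\sigma$, the operator $\T_\sigma$ is monotone: if $u \le w$ then $\T_\sigma(u) \le \T_\sigma(w)$, since $p(s,\sigma(s))$ has nonnegative entries. The same holds for $\T_*$, as a maximum of monotone maps.

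\textbf{Step 2: one step of improvement.} Since $\pi$ is one of the candidates in the maximization,
\[
\T_{\pi^+}(v_\pi) = \T_*(v_\pi) \ge \T_\pi(v_\pi) = v_\pi .
\]

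\textbf{Step 3: iterate and pass to the limit.} Apply $\T_{\pi^+}$ repeatedly to $\T_{\pi^+}(v_\pi) \ge v_\pi$. By monotonicity, the sequence $\T_{\pi^+}^k(v_\pi)$ is nondecreasing in $k$, so in particular $\T_{\pi^+}^k(v_\pi) \ge \T_{\pi^+}(v_\pi)$ for all $k \ge 1$. By \eqref{eq:gen_contract}, $\T_{\pi^+}$ is a $\gamma$-contraction, so this sequence converges to its unique fixed point $v_{\pi^+}$. Taking limits gives
\[
v_{\pi^+} \ge \T_{\pi^+}(v_\pi) = \T_*(v_\pi).
\]

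\textbf{Step 4: upper bound.} By optimality of $\vstar$, we have $v_{\pi^+} \le \vstar$ entrywise.

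\textbf{Step 5: conclude.} Steps 3 and 4 give, entrywise,
\[
0 \le \vstar - v_{\pi^+} \le \vstar - \T_*(v_\pi) = \T_*(\vstar) - \T_*(v_\pi),
\]
where the last equality uses that $\vstar$ is the fixed point of $\T_*$. Taking $\infty$-norms and applying the contraction bound of \cref{lemma:value-bound} yields
\[
\|v_{\pi^+} - \vstar\|_\infty \le \|\T_*(\vstar) - \T_*(v_\pi)\|_\infty \le \gamma \|v_\pi - \vstar\|_\infty .
\]

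The only step needing care is Step 3. Monotonicity of the iterates of $\T_{\pi^+}$ is what converts the one-step inequality $\T_{\pi^+}(v_\pi) \ge v_\pi$ into $v_{\pi^+} \ge \T_*(v_\pi)$. An equivalent route is to write $v_{\pi^+} - v_\pi = (\mathbf{I} - \gamma \P_{\pi^+})^{-1}(\T_{\pi^+}(v_\pi) - v_\pi)$ and note that $(\mathbf{I} - \gamma \P_{\pi^+})^{-1} = \sum_{k \ge 0} \gamma^k \P_{\pi^+}^k$ is entrywise nonnegative with diagonal entries at least $1$. Every other step is a direct application of \cref{lemma:value-bound}.
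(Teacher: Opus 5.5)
Your proof is correct and complete. Note that the paper gives no proof of this lemma; it defers entirely to \cite[Lemma 2]{Scherrer13}.

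The usual proof of this fact decomposes $\vstar - v_{\pi^+}$ through the linear operator $\T_{\pi_*}$ of an optimal policy $\pi_*$. It uses $\T_{\pi_*}(v_\pi) \le \T_{\pi^+}(v_\pi)$ together with policy improvement $v_{\pi^+} \ge v_\pi$, and reaches the componentwise bound $0 \le \vstar - v_{\pi^+} \le \gamma \P_{\pi_*}(\vstar - v_\pi)$ before taking norms.

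You instead prove the stronger intermediate inequality $v_{\pi^+} \ge \T_*(v_\pi)$, by monotone iteration or equivalently the Neumann series for $(\mathbf{I} - \gamma \P_{\pi^+})^{-1}$. You then apply the $\gamma$-contraction of the nonlinear operator $\T_*$ from \cref{lemma:value-bound} directly.

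Your route is self-contained in tools the paper already states (\cref{lemma:value-bound} and \eqref{eq:gen_contract}) and never needs an optimal policy. The $\T_{\pi_*}$ route gives a finer, $\P_{\pi_*}$-weighted componentwise statement. That can help in sharper analyses but is unnecessary for the $\ell_\infty$ bound used in \cref{cor:random-policy-iteration}.
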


Using \Cref{lem:one_step_policy_iter}, we prove the following corollary. 

\begin{corollary}
  Given a $\gamma$-DMDP $\M = (\S, \actionset, r, p)$, \cref{alg:randomized-policy-iteration} computes an optimal policy $\pi$ of $\M$ in expected $\tilde{O}((\frac{\Stot^{\omega+1} + \Stot^2\Atot}{1-\gamma})\log(\frac{1}{1-\gamma}))$ time.
  \label{cor:random-policy-iteration}
\end{corollary}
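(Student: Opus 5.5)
The plan is to view \cref{alg:randomized-policy-iteration} as \cref{alg:reduction} with a uniformly random starting policy. The approximate-values step is then carried out by running policy iteration from the random policy, rather than by calling $\mathsf{ApxALG}$ on the shifted MDP. Correctness and the $O(\Stot\log\Atot)$ expected bound on outer iterations then follow from \cref{lemma:discard} and \cref{lemma:potential}, exactly as in the proof of \cref{thm:rand-reduction}. Each outer iteration costs $O(\frac{1}{1-\gamma}\log\frac{1}{1-\gamma})$ policy iteration steps, and each step costs $O(\Stot^\omega+\Stot\Atot)$ by \cref{fact:compute-value} together with computing all $Q$-values. Multiplying gives the claimed $\tilde O((\frac{\Stot^{\omega+1}+\Stot^2\Atot}{1-\gamma})\log\frac{1}{1-\gamma})$.

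Fix an outer iteration with random policy $\pi_0$, and let $\pi$ be the policy after the $K=\lceil \ln(3(1+\gamma)/(1-\gamma)^2)/(1-\gamma)\rceil$ inner steps. Take $v\defeq v_\pi$, which is the natural reading of $v$ in the discard line. The discard threshold is
\[
\slack_{\max}(v_\pi)\tfrac{1-\gamma}{3}=\epsilon(1+\gamma), \qquad \text{where } \epsilon\defeq \slack_{\max}(v_\pi)\tfrac{1-\gamma}{3(1+\gamma)}.
\]
So by the first part of \cref{lemma:discard} it suffices to show that $v_\pi$ is $\epsilon$-optimal. For this I will bound $\|v_\pi-v_*\|_\infty \le \slack_{\max}(v_\pi)/(1-\gamma)$. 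This holds because $0\le v_*-v_\pi$ and $\T_*(v_\pi)-v_\pi$ has entries $\max_a\slack(v_\pi)_{s,a}\le\slack_{\max}(v_\pi)$, so \cref{lemma:value-bound} applies. That bound alone is too weak, so I instead combine it with the contraction of \cref{lem:one_step_policy_iter}:
\[
\|v_\pi-v_*\|_\infty\le\gamma^K\|v_{\pi_0}-v_*\|_\infty\le \tfrac{(1-\gamma)^2}{3(1+\gamma)}\|v_{\pi_0}-v_*\|_\infty ,
\]
using $\gamma^K\le e^{-(1-\gamma)K}$.

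The main obstacle is reconciling the two reference policies. Discarding is justified relative to $\pi$, while progress (the set $D^\M(\pi_0)$ from \cref{lemma:potential}) is measured relative to the random $\pi_0$. Claim \cref{claim:bound} gives $\slack_{\max}(v_\pi)\le\|v_\pi-v_*\|_\infty$, which points the wrong way for bounding $\epsilon$ from below. So I would take the threshold to be anchored by the quantity actually needed. For correctness I use $\|v_\pi-v_*\|_\infty\le\slack_{\max}(v_\pi)/(1-\gamma)$, and if this is not strong enough I plan to adjust constants or replace $\epsilon$ with $\gamma^K\|v_{\pi_0}-v_*\|$-type bounds.

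For progress I would show $D\supseteq D^\M(\pi_0)$. Any $(s,a)$ with $\slack(v_*)_{s,a}\le\slack_{*,\min}(\pi_0)\le-(1-\gamma)\|v_{\pi_0}-v_*\|_\infty$ (by \cref{lemma:large-slack-opt}) satisfies, via \cref{lemma:slack-diff},
\[
\slack(v_\pi)_{s,a}\le -(1-\gamma)\|v_{\pi_0}-v_*\|_\infty+(1+\gamma)\|v_\pi-v_*\|_\infty .
\]
The right-hand side is at most $-(1-\gamma)\|v_{\pi_0}-v_*\|_\infty\bigl(1-\tfrac{1-\gamma}{3}\bigr)$. This is below $-\frac{1-\gamma}{3}\slack_{\max}(v_\pi)$, because $\slack_{\max}(v_\pi)\le\|v_\pi-v_*\|_\infty\le\|v_{\pi_0}-v_*\|_\infty$. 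So all of $D^\M(\pi_0)$ is discarded whenever $\pi_0$ is suboptimal.

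Termination also needs checking. If $D=\emptyset$, then $D^\M(\pi_0)$ is empty, so $\pi_0$ was optimal and policy iteration keeps $\pi$ optimal. Then \cref{lemma:potential} halves $\Phi$ in expectation per outer iteration, yielding $O(\Stot\log\Atot)$ expected outer iterations, and hence the stated total running time.
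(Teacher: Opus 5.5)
\textbf{There is a genuine gap in the correctness step.} It comes from which policy you anchor the discard threshold to. You take both $v$ and $\slack_{\max}$ at the final policy $\pi$ produced by the inner loop. The first part of \cref{lemma:discard} would then need
\[
\|v_\pi-\vstar\|_\infty\le\slack_{\max}(v_\pi)\tfrac{1-\gamma}{3(1+\gamma)}.
\]
But \cref{claim:bound} with \cref{lemma:value-shift} gives $\slack_{\max}(v_\pi)\le\|v_\pi-\vstar\|_\infty$, so this fails for every suboptimal $\pi$. No adjustment of constants helps, and your fallback bound $\|v_\pi-\vstar\|_\infty\le\slack_{\max}(v_\pi)/(1-\gamma)$ is too weak by a factor $3(1+\gamma)/(1-\gamma)^2$.

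Worse, under this reading the procedure itself can discard the unique optimal action. Here is an instance:
\begin{itemize}
\item Take a chain $1,\dots,n$ in which each state $i<n$ can stay (reward $0$) or advance to $i+1$ (reward $0$). State $n$ can stay (reward $0$) or exit to an absorbing zero-reward state with reward $1$.
\item Add a state $s$ whose action $a$ moves to $t\defeq n-k$ with reward $0$, and whose action $b$ exits with reward $\gamma^{k+1}/2$. Here $k<n$ is the number of inner policy-iteration steps.
\item Let $\pi_0$ stay everywhere on the chain, and break ties toward the current action.
\end{itemize}
The $k$ Howard steps switch exactly the states $n-k+1,\dots,n$ and set $\pi(s)=b$. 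Then $\slack_{\max}(v_\pi)=\gamma^k=\|v_\pi-\vstar\|_\infty$. The unique optimal action $(s,a)$ has $\slack(v_\pi)_{s,a}=-\gamma^{k+1}/2$, which is below $-\frac{1-\gamma}{3}\gamma^k$ whenever $\gamma>2/5$. So $(s,a)$ is discarded, and the reduced instance has a strictly smaller optimal value at $s$.

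Correctness rests on this step, and so does the invariance of $\vstar$ that you need to reapply \cref{lemma:potential}. So the proposal as written does not prove the statement. Your computation showing $D\supseteq D^\M(\pi_0)$ is fine, but it concerns progress, not correctness.

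\textbf{The missing idea, which is what the paper's proof uses, is to anchor the threshold at the random starting policy.} Set $\eps\defeq\slack_{\max}(v_{\pi_0})\frac{1-\gamma}{3(1+\gamma)}$, and take $v=v_{\pi_k}$, the value of the policy after the inner loop. Your suggestion of using $\gamma^K\|v_{\pi_0}-\vstar\|_\infty$ points the right way, but that quantity is not available to the algorithm. Its computable proxy is $\slack_{\max}(v_{\pi_0})$, since
\[
(1-\gamma)\|v_{\pi_0}-\vstar\|_\infty\le\slack_{\max}(v_{\pi_0})\le\|v_{\pi_0}-\vstar\|_\infty.
\]
The left inequality is your own bound, applied at $\pi_0$ rather than at the final policy. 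Combining it with \cref{lem:one_step_policy_iter} gives
\[
\|v_{\pi_k}-\vstar\|_\infty\le\tfrac{(1-\gamma)^2}{3(1+\gamma)}\|v_{\pi_0}-\vstar\|_\infty\le\tfrac{1-\gamma}{3(1+\gamma)}\slack_{\max}(v_{\pi_0})=\eps .
\]
The right inequality, from \cref{claim:bound}, gives $\eps\le\frac{1-\gamma}{3(1+\gamma)}\|v_{\pi_0}-\vstar\|_\infty$. Hence the inner loop is a valid implementation of \texttt{MultValApprox} for $\pi_0$. \Cref{thm:framework_correctness} then yields both correctness and $D\supseteq D^\M(\pi_0)$ directly. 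Your outer-iteration count via \cref{lemma:potential} and your per-iteration cost accounting go through unchanged.
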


\begin{proof}[Proof of \Cref{cor:random-policy-iteration}]
    We show that \cref{alg:randomized-policy-iteration} is an instantiation of \cref{alg:reduction}, where $\pi$ in \Cref{line:policy} is chosen uniformly at random form the remaining policies (as in \Cref{sec:rand_reduction}) and the for loop in \Cref{line:for_rand_PI} is used to implement \Cref{line:aprox_val_solve} in \Cref{alg:reduction}, i.e., compute $\epsilon$-optimal values $v$ for $\epsilon = \slack_{\max}^{\M}(v_{\pi}) \cdot \frac{1-\gamma}{3(1+\gamma)}$. 
    For this, note that 
    \Cref{lem:one_step_policy_iter} implies that after $\lceil \frac{\ln (3(1+\gamma)/(1-\gamma)^2)}{1-\gamma}\rceil$ iterations of policy iteration---i.e., $\lceil \frac{\ln (3(1+\gamma)/(1-\gamma)^2)}{1-\gamma}\rceil$ iterations of the for loop in \Cref{line:for_rand_PI}---starting from $\pi$, the new policy $\pi'$ satisfies that $\|v^\M_{\pi'} - v^\M_*\|_{\infty} \le \frac{(1-\gamma)^2}{3 (1+\gamma)} \|v^\M_{\pi} - v^\M_*\|_{\infty}$. 
    We now show that $(1-\gamma) \|v^\M_{\pi} - v^\M_*\|_{\infty} \le \slack_{\max}^{\M}(v_{\pi})$. For this, note that it suffices to show $(1-\gamma) \|v_{*}^{\M^\pi}\|_{\infty} \le \slack_{\max}^{\M}(v_{\pi})$. This follows immediately from the fact that $\slack_{\max}^{\M}(v_{\pi})$ is the maximum reward in the $\gamma$-DMDP $\M^\pi$. 
    Therefore, the loop in \Cref{line:for_rand_PI} computes $\epsilon$-optimal values $v$ for $\epsilon = \slack_{\max}^{\M}(v_{\pi}) \cdot \frac{1-\gamma}{3(1+\gamma)}$. Thus, the correctness of the algorithm follows from \Cref{thm:framework_correctness}. 

    To bound the running time, note that computing $v_{\pi}$ can be implemented in $O(\Stot^\omega)$ time (\Cref{fact:compute-value}). Additionally, $\slack_{\max}^{\M}(v_{\pi})$ and each iteration of the for loop in \Cref{line:for_rand_PI} can be implemented in $O(\Stot \Atot)$. 
    Moreover, by \Cref{lemma:potential}, the number of iterations $T$ is at most $\otilde(\Stot)$ in expectation, which implies the desired running time. %
\end{proof}

This result opens the door to analyzing a range of randomized variants of the policy iteration algorithm. 
We chose to analyze this particular algorithm in \Cref{alg:randomized-policy-iteration} as it is consistent with the framework discussed at the beginning of \Cref{sec:reduction}. 
We leave the exploration of alternative randomized variants of the policy iteration algorithm for future work.

\section*{Acknowledgements}

Andrei Graur was supported in part by the Nakagawa departmental fellowship award from the Management
Science and Engineering Department at Stanford University, Microsoft Research Faculty Fellowship, NSF CAREER Award CCF-1844855, and NSF Grant CCF-1955039.
Aaron Sidford was supported in part by a Microsoft Research Faculty Fellowship, NSF CAREER Award CCF-1844855, NSF Grant CCF-1955039, and a PayPal research award.
Ta-Wei Tu was supported in part by a Microsoft Research Faculty Fellowship and NSF CAREER Award CCF-1844855.

\bibliographystyle{alpha}
\bibliography{bib.bib}

\appendix
\section{From Monte Carlo Algorithms to Las Vegas Algorithms}\label{sec:monte-carlo-to-las-vegas}

In this section we prove \cref{lemma:mc-to-lv-intro}, restated below. 

\MCtoLV*

\begin{proof}
  Consider running the Monte Carlo algorithm on $\M$ with accuracy $\delta^\prime \defeq \delta \cdot \frac{1 - \gamma}{1+\gamma}$.
  Let $v$ be the values it returns.
  In $O(\Stot\Atot)$ time we can compute $\T_*(v)$ and $\|v - \T_*(v)\|_{\infty}$.
  If $\|v - \T_*(v)\|_{\infty} \leq \delta\|r\|_{\infty} \cdot (1-\gamma)$, then we have by \cref{lemma:value-bound} that $\|v - v_*\|_{\infty} \leq \delta\|r\|_{\infty}$ and thus we can output $v$.
  On the other hand, again by \cref{lemma:value-bound} we know that $\|\T_*(v) - v\|_{\infty} \leq \|\T_*(v) - v_*\|_{\infty} + \|v_* - v\|_{\infty} \leq (1+\gamma)\|v - v_*\|_{\infty}$ and thus if $v$ is $\delta^\prime \|r\|_{\infty}$-optimal then we must have that $\|v - \T_*(v)\|_{\infty} \leq \delta\|r\|_{\infty} \cdot (1-\gamma)$.
  Thus, with constant probability this procedure  outputs values $v$ with certified  optimality.
  Repeating this procedure until such a $v$ is output yields the result.
\end{proof}

\end{document}